\begin{document}

\title{Strongly Stable and Maximum Weakly Stable Noncrossing Matchings
\thanks{A preliminary version of this paper appeared in the proceedings of the 31st International Workshop on Combinatorial Algorithms (IWOCA 2020) \cite{hmo20}.
This work was supported by JSPS KAKENHI Grant Numbers JP16K00017, JP19K12820, and JP20K11677.}
}


\author{Koki Hamada  \and Shuichi Miyazaki  \and Kazuya Okamoto}


\institute{K. Hamada \at
              NTT Corporation, 3-9-11, Midori-cho, Musashino-shi, Tokyo 180-8585, Japan \\
              Graduate School of Informatics, Kyoto University, Yoshida-Honmachi, Sakyo-ku Kyoto 606-8501, Japan \\
              \email{koki.hamada.rb@hco.ntt.co.jp}           
           \and
          S. Miyazaki \at
          Academic Center for Computing and Media Studies, Kyoto University, Yoshida-Honmachi, Sakyo-ku, Kyoto 606-8501, Japan \\
              \email{shuichi@media.kyoto-u.ac.jp}           
           \and
          K. Okamoto \at
          Division of Medical Information Technology and Administration Planning, Kyoto University Hospital, 54 Kawaharacho, Shogoin, Sakyo-ku, Kyoto 606-8507, Japan\\
              \email{kazuya@kuhp.kyoto-u.ac.jp}           
}

\date{}

\sloppy

\maketitle

\begin{abstract}
In IWOCA 2019, Ruangwises and Itoh introduced {\em stable noncrossing matchings}, where participants of each side are aligned on each of two parallel lines, and no two matching edges are allowed to cross each other.  
They defined two stability notions, {\em strongly stable noncrossing matching} ({\em SSNM}) and {\em weakly stable noncrossing matching} ({\em WSNM}), depending on the strength of blocking pairs.
They proved that a WSNM always exists and presented an $O(n^{2})$-time algorithm to find one for an instance with $n$ men and $n$ women. 
They also posed open questions of the complexities of determining existence of an SSNM and finding a largest WSNM.  
In this paper, we show that both problems are solvable in polynomial time.
Our algorithms are applicable to extensions where preference lists may include ties, except for one case which we show to be NP-complete.
This NP-completeness holds even if each person's preference list is of length at most two and ties appear in only men's preference lists.
To complement this intractability, we show that the problem is solvable in polynomial time if the length of preference lists of one side is bounded by one (but that of the other side is unbounded).
\keywords{Stable marriage \and Noncrossing matching \and Polynomial-time algorithms \and NP-completeness.}
\end{abstract}

\section{Introduction}\label{sec:intro}
In the classical {\em stable marriage problem} \cite{gs62}, there are two sets of participants, traditionally illustrated as men and women, where each person has a {\em preference list} that orders a subset of the members of the opposite gender.
This variant is called the stable marriage with incomplete liests, or {\em SMI} for short.
A {\em matching} is a set of (man, woman)-pairs where no person appears more than once.
A {\em blocking pair} for a matching $M$ is (informally) a pair of a man and a woman who are not matched together in $M$ but both of them become better off if they are matched.
A matching that admits no blocking pair is a {\em stable matching}.
The stable marriage problem is one of the recently best-studied topics, with a lot of applications to matching and assignment systems, such as high-school match \cite{apr05,aprs05} and medical resident assignment \cite{roth84}.
See some textbooks \cite{knuth76,gi89,rs90,man13} for more information.

Recently, Ruangwises and Itoh \cite{ri19} incorporated the notion of noncrossing matchings \cite{atal85,clw15,kt86,mop93,ww85} to the stable marriage problem.
In their model, there are two parallel lines where $n$ men are aligned on one line and $n$ women are aligned on the other line.
A matching is {\em noncrossing} if no two edges of it cross each other.  
A {\em stable noncrossing matching} is a matching which is simultaneously stable and noncrossing.
They defined two notions of stability: 
In a {\em strongly stable noncrossing matching} ({\em SSNM}), the definition of a blocking pair is the same as that of the standard stable marriage problem.
Thus the set of SSNMs is exactly the intersection of the set of stable matchings and that of noncrossing matchings.
In a {\em weakly stable noncrossing matching} ({\em WSNM}), a blocking pair has an additional condition that it must not cross matching edges.
Ruangwises and Itoh \cite{ri19} proved that a WSNM exists for any instance, and presented an $O(n^{2})$-time algorithm for the problem of find a WSNM (denoted {\sc Find\_WSNM}).
They also demonstrated that an SSNM does not always exist, and that there can be WSNMs of different sizes.
Concerning these observations, they posed open questions on the complexities of the problems of determining the existence of an SSNM (denoted {\sc Exist\_SSNM}) and finding a WSNM of maximum cardinality (denoted {\sc Max\_WSNM}).

\bigskip

\noindent
{\bf Our Contributions.} 
Table \ref{tbl:results} summarizes previous and our results, where our results are described in bold.
We first show that both the above mentioned open problems are solvable in polynomial time.
Specifically, {\sc Exist\_SSNM} is solved in $O(n^{2})$-time by exploiting the well-known Rural Hospitals theorem (Proposition~\ref{prop:SSNM}) and {\sc Max\_WSNM} is solved in $O(n^{4})$-time by an algorithm based on dynamic programming (Theorem~\ref{thm:MAX-WSNM}).

We then consider extended problems where preference lists may include ties.
When ties are allowed in preference lists, the problem is denoted {\em SMTI} and there are three stability notions, {\em super-}, {\em strong}, and {\em weak} stabilities \cite{ir94}.
We show that our algorithm for solving {\sc Max\_WSNM} is applicable to all of the three stability notions with slight modifications (Corollary \ref{coro:MAX-WSNM-super-strong-weak}).
We also show that our algorithm for solving {\sc Exist\_SSNM} can be applied to super- and strong stabilities without any modification (Corollaries \ref{coro:SSNM-super} and \ref{coro:SSNM-strong}).
In contrast, we show that {\sc Exist\_SSNM} is NP-complete for the weak stability (Theorem~\ref{thm:SSNM-weak}).

This NP-completeness holds even for a restricted case where the length of each person's preference list is at most two and ties appear in only men's preference lists.
To complement this intractability, we show that if each man's preference list contains at most one woman (but women's preference lists may be of unbounded length), the problem is solvable in $O(n)$-time (Theorem \ref{thm:SSNM-weak-positive}).
If we parameterize this problem by two positive integers $p$ and $q$ that bound the lengths of preference lists of men and women, respectively, Theorems \ref{thm:SSNM-weak} shows that the problem is NP-complete even if $p \leq 2$ and $q \leq 2$, while Theorem \ref{thm:SSNM-weak-positive} shows that the problem is solvable in polynomial time if $p=1$ or $q=1$ (by symmetry of men and women).  
Thus the computational complexity of the problem is completely solved in terms of the length of preference lists.
We remark that this is a rare case since many NP-hard variants of the stable marriage problem can be solved in polynomial time if the length of preference lists of one side is bounded by two \cite{imm09,bmm10,bmm12,mo19}.

\begin{table}[htb]
\begin{center}
\renewcommand{\arraystretch}{1.3}
\caption{Previous and our results (our results in bold).}\label{tbl:results}
\smallskip
  \begin{tabular}{ll|l|l|l} \hline
     & & \ {\sc Exist\_SSNM} & \ {\sc Find\_WSNM} & \ {\sc Max\_WSNM} \\ \hline 
   SMI & & \ {\bf\boldmath $O(n^{2})$ [Proposition~\ref{prop:SSNM}]} \ & \ $O(n^{2})$ \cite{ri19} \ & \ {\bf\boldmath $O(n^{4})$ [Theorem~\ref{thm:MAX-WSNM}]} \ \\ \hline 
    SMTI  & super & \ {\bf\boldmath $O(n^{2})$ [Corollary~\ref{coro:SSNM-super}]} & & \ {\bf\boldmath $O(n^{4})$ [Corollary~\ref{coro:MAX-WSNM-super-strong-weak}]}  \\ \cline{2-5}
   & strong \ & \ {\bf\boldmath $O(n^{3})$ [Corollary~\ref{coro:SSNM-strong}]} & & \ {\bf\boldmath $O(n^{4})$ [Corollary~\ref{coro:MAX-WSNM-super-strong-weak}]}  \\ \cline{2-5}
      & weak & \ {\bf NPC$^{*1}$ [Theorem~\ref{thm:SSNM-weak}]} & & \ {\bf\boldmath $O(n^{4})$ [Corollary~\ref{coro:MAX-WSNM-super-strong-weak}]}  \\
      & & \ {\bf\boldmath $O(n)$}$^{*2}$ {\bf [Theorem~\ref{thm:SSNM-weak-positive}]} & &   \\ \hline
  \end{tabular}
\end{center}
$^{*1}$ even if each person's preference list contains at most two persons and ties appear in only men's preference lists.\\
$^{*2}$ if each man's preference list contains at most one woman.
\end{table}

\bigskip

\noindent
{\bf Progress from the Conference Version.}  A preliminary version of this paper appeared in the proceedings of the 31st International Workshop on Combinatorial Algorithms (IWOCA 2020) \cite{hmo20}.
In \cite{hmo20}, only NP-completeness was shown for {\sc Exist\_SSNM} in the weak stability in SMTI (of unbounded-length preference lists).
In the current manuscript, we investigated the computational complexity of this problem in terms of the length of preference lists:
We strengthened the reduction in the proof of Theorem \ref{thm:SSNM-weak} to show that NP-completeness holds even if the length of preference lists is at most two.
Moreover, we added Theorem \ref{thm:SSNM-weak-positive} that shows that the problem can be solved in polynomial time if the length of preference lists of one-side is at most one.
As mentioned previously, these two theorems solve the computational complexity of this problem in terms of the length of preference lists.

\section{Preliminaries}\label{sec:pre}

In this section, we give necessary definitions and notations, some of which are taken from Ruangwises and Itoh \cite{ri19}.
An instance consists of $n$ men $m_{1}, m_{2}, \ldots, m_{n}$ and $n$ women $w_{1}, w_{2}, \ldots, w_{n}$.
We assume that the men are lying on a vertical line in an increasing order of indices from top to bottom, and similarly the women are lying in the same manner on another vertical line parallel to the first one.
Each person has a preference list over a subset of the members of the opposite gender.
For now, assume that preference lists are strict, i.e., do not contain ties.
We call such an instance an {\em SMI-instance}.
If a person $q$ appears in a person $p$'s preference list, we say that $q$ is {\em acceptable} to $p$.
If $p$ and $q$ are acceptable to each other, we say that $(p,q)$ is an {\em acceptable pair}.
We assume without loss of generality that acceptability is mutual, i.e., $p$ is acceptable to $q$ if and only if $q$ is acceptable to $p$.
If $p$ prefers $q_{1}$ to $q_{2}$, then we write $q_{1} \succ_{p} q_{2}$.

A {\em matching} is a set of acceptable pairs of a man and a woman in which each person appears at most once.
If $(m,w) \in M$, we write $M(m)=w$ and $M(w)=m$.
If a person $p$ is not included in a matching $M$, we say that $p$ is {\em single} in $M$ and write $M(p)=\emptyset$.
Every person prefers to be matched with an acceptable person rather than to be single, i.e., $q \succ_{p} \emptyset$ holds for any $p$ and any $q$ acceptable to $p$.

A pair in a matching can be seen as an edge on the plane, so we may use ``pair'' and ``edge'' interchangeably.
Two edges $(m_{i}, w_{j})$ and $(m_{x}, w_{y})$ are said to {\em cross} each other if they share an interior point, or formally, if $(x-i)(y-j) < 0$ holds.
A matching is {\em noncrossing} if it contains no pair of crossing edges.

For a matching $M$, an acceptable pair $(m, w) \not\in M$ is called a {\em blocking pair} for $M$ (or $(m, w)$ {\em blocks} $M$) if both $w \succ_{m} M(m)$ and $m \succ_{w} M(w)$ hold.
A {\em noncrossing blocking pair} for $M$ is a blocking pair for $M$ that does not cross any edge of $M$.
A matching $M$ is a {\em weakly stable noncrossing matching} ({\em WSNM}) if $M$ is noncrossing and does not admit any noncrossing blocking pair. 
A matching $M$ is a {\em strongly stable noncrossing matching} ({\em SSNM}) if $M$ is noncrossing and does not admit any blocking pair.
Note that an SSNM is always a WSNM by definition but the converse is not true.


We then extend the above definitions to the case where preference lists may contain ties.  
A {\em tie} of a person $p$'s preference list is a set of one or more persons who are equally preferred by $p$, and $p$'s preference list is a strict order of ties.  
We call such an instance an {\em SMTI-instance}.  
In a person $p$'s preference list, suppose that a person $q_{1}$ is in tie $T_{1}$, $q_{2}$ is in tie $T_{2}$, and $p$ prefers $T_{1}$ to $T_{2}$.  
Then we say that $p$ {\em strictly prefers} $q_{1}$ to $q_{2}$ and write $q_{1} \succ_{p} q_{2}$.  If $q_{1}$ and $q_{2}$ are in the same tie (including the case that $q_{1}$ and $q_{2}$ are the same person), we write $q_{1} =_{p} q_{2}$.
If $q_{1} \succ_{p} q_{2}$ or $q_{1} =_{p} q_{2}$ holds, we write $q_{1} \succeq_{p} q_{2}$ and say that $p$ {\em weakly prefers} $q_{1}$ to $q_{2}$.

When ties are present, there are three possible definitions of blocking pairs, and accordingly, there are three stability notions, {\em super-stability}, {\em strong stability}, and {\em weak stability} \cite{ir94}:

\begin{itemize}

\item In the super-stability, a blocking pair for a matching $M$ is an acceptable pair $(m, w) \not\in M$ such that $w \succeq_{m} M(m)$ and $m \succeq_{w} M(w)$.

\item In the strong stability, a blocking pair for a matching $M$ is an acceptable pair $(p, q) \not\in M$ such that $q \succeq_{p} M(p)$ and $p \succ_{q} M(q)$.
Note that the person $q$, who strictly prefers the counterpart $p$ of the blocking pair, may be either a man or a woman. 

\item In the weak stability, a blocking pair for a matching $M$ is an acceptable pair $(m, w) \not\in M$ such that $w \succ_{m} M(m)$ and $m \succ_{w} M(w)$.

\end{itemize}

With these definitions of blocking pairs, the terms ``noncrossing blocking pair'', ``WSNM'', and ``SSNM'' for each stability notion can be defined analogously.
In the SMTI case, we extend the names of stable noncrossing matchings using the type of stability as a prefix.
For example, a WSNM in the super-stability is denoted {\em super-WSNM}.

Note that, in this paper, the terms ``weak'' and ``strong'' are used in two different meanings.
This might be confusing but we decided not to change these terms, respecting previous literature.

For implementation of our algorithms, we use ranking arrays described in Sect.~1.2.3 of \cite{gi89}. 
Although in \cite{gi89} ranking arrays are defined for complete preference lists without ties, they can easily be modified for incomplete lists and/or with ties.
Then, by the aid of ranking arrays, we can determine, given persons $p$, $q_{1}$, and $q_{2}$, whether $q_{1} \succ_{p} q_{2}$ or $q_{2} \succ_{p} q_{1}$ or $q_{1} =_{p} q_{2}$ in constant time. 
Also we can determine, given $m$ and $w$, if $(m,w)$ is an acceptable pair or not in constant time.

\section{Strongly Stable Noncrossing Matchings}\label{sec:SSNM}

\subsection{SMI}\label{sec:SSNM-SMI}

In SMI, an easy observation shows that existence of an SSNM can be determined in $O(n^{2})$ time:

\begin{proposition}\label{prop:SSNM}
There exists an $O(n^{2})$-time algorithm to find an SSNM or to report that none exists, given an SMI-instance.
\end{proposition}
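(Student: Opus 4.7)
The plan is to exploit the Rural Hospitals theorem for SMI, which states that the set of matched persons is identical in every stable matching, i.e., there exist fixed sets $M^{*} \subseteq \{m_{1}, \ldots, m_{n}\}$ and $W^{*} \subseteq \{w_{1}, \ldots, w_{n}\}$ such that every stable matching pairs up exactly the members of $M^{*}$ with the members of $W^{*}$. First I would compute an arbitrary stable matching $M_{0}$ by the Gale--Shapley algorithm in $O(n^{2})$ time, and read off $M^{*}$ and $W^{*}$ from $M_{0}$.

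The key observation is that the noncrossing condition leaves only one candidate compatible with $M^{*}$ and $W^{*}$. Let $m_{i_{1}}, \ldots, m_{i_{k}}$ be the men of $M^{*}$ listed in increasing index order, and $w_{j_{1}}, \ldots, w_{j_{k}}$ the women of $W^{*}$ listed likewise. A perfect matching between $M^{*}$ and $W^{*}$ is noncrossing if and only if it pairs $m_{i_{t}}$ with $w_{j_{t}}$ for every $t$; any other bijection reverses the order of some two women and therefore produces a pair of crossing edges. Call this unique candidate $M^{nc}$.

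Hence the algorithm is: construct $M^{nc}$, check that each pair $(m_{i_{t}}, w_{j_{t}})$ is acceptable, and then test whether $M^{nc}$ is stable in the classical SMI sense by examining every acceptable pair for a blocking violation. Both checks clearly run in $O(n^{2})$ time, matching the cost of the initial Gale--Shapley step. Output $M^{nc}$ if it passes both checks, and report ``no SSNM'' otherwise.

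Correctness of the negative answer is the only point requiring care, but it follows directly from the two ingredients above: any SSNM is, in particular, a stable matching, so by Rural Hospitals it must match $M^{*}$ to $W^{*}$; being noncrossing then forces it to coincide with $M^{nc}$. Thus if $M^{nc}$ fails to be a valid stable matching, no SSNM can exist. I do not expect any genuine obstacle here; the only conceptual step is recognizing that Rural Hospitals plus the noncrossing constraint pin down a unique candidate, after which verification is a routine $O(n^{2})$ scan.
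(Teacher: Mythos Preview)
Your proposal is correct and follows essentially the same approach as the paper: use Gale--Shapley to obtain the invariant set of matched agents (via the Rural Hospitals theorem), observe that there is a unique noncrossing way to pair them, and then verify stability of this sole candidate in $O(n^{2})$ time. The only minor addition you make explicit is the acceptability check on the pairs of $M^{nc}$, which the paper folds into the final stability check.
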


\begin{proof}
Note that an SSNM is a stable matching in the original sense.
In SMI, there always exists at least one stable matching \cite{gi89}, and due to the Rural Hospitals theorem \cite{gs85,roth84,roth86}, the set of matched agents is the same in any stable matching.
These agents can be determined in $O(n^{2})$ time by using the Gale-Shapley algorithm \cite{gs62}.
There is only one way of matching them in a noncrossing manner.
Hence the matching constructed in this way is the unique candidate for an SSNM.
All we have to do is to check if it stable, which can be done in $O(n^{2})$ time. \qed
\end{proof}

\subsection{SMTI}\label{sec:SSNM-SMTI}

In the presence of ties, super-stable and strongly stable matchings do not always exist.
However, there is an $O(n^{2})$-time ($O(n^{3})$-time, respectively) algorithm that finds a super-stable (strongly stable, respectively) matching or reports that none exists \cite{ir94,kmmp04}.
Also, the Rural Hospitals theorem takes over to the super-stability \cite{ims00} and strong stability \cite{ims03}.
Therefore, the same algorithm as in Sect.~\ref{sec:SSNM-SMI} applies for these cases, implying the following corollaries:

\begin{corollary}\label{coro:SSNM-super}
There exists an $O(n^{2})$-time algorithm to find a super-SSNM or to report that none exists, given an SMTI-instance.
\end{corollary}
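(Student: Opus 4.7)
The plan is to mimic the proof of Proposition~\ref{prop:SSNM} step by step, relying on the two ingredients that carry over from SMI to the super-stability setting of SMTI: the existence of a polynomial-time algorithm for finding a super-stable matching (or declaring that none exists) \cite{ir94}, and the Rural Hospitals theorem for super-stability \cite{ims00}. A super-SSNM is, by definition, a matching that is simultaneously noncrossing and super-stable in the ordinary sense, so any super-SSNM must in particular be super-stable.

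First, I would run the $O(n^{2})$-time algorithm of \cite{ir94} on the given SMTI-instance. If it reports that no super-stable matching exists, then in particular no super-SSNM exists and we output that. Otherwise, let $M_{0}$ be the super-stable matching returned. By the Rural Hospitals theorem for super-stability, every super-stable matching matches exactly the same set of agents as $M_{0}$; denote this set by $S$. In particular, any super-SSNM (if one exists) must match precisely the agents in $S$.

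Next I would exploit the geometric rigidity of noncrossing matchings: once the set $S$ of matched men and matched women is fixed, there is a unique noncrossing perfect matching between them, obtained by pairing the $i$-th matched man (from top to bottom) with the $i$-th matched woman. Call this matching $M^{*}$. Constructing $M^{*}$ takes $O(n)$ time once $S$ is known. Then $M^{*}$ is the unique candidate for a super-SSNM, so it suffices to check two conditions: (i) every pair of $M^{*}$ is an acceptable pair, and (ii) $M^{*}$ admits no super-blocking pair. Both checks can be performed in $O(n^{2})$ time using the ranking arrays discussed in Section~\ref{sec:pre}: acceptability is tested in constant time per pair, and for super-stability we loop over all $O(n^{2})$ acceptable pairs $(m,w)\notin M^{*}$ and test in $O(1)$ whether $w \succeq_{m} M^{*}(m)$ and $m \succeq_{w} M^{*}(w)$. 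If both checks pass, $M^{*}$ is output as a super-SSNM; otherwise, no super-SSNM exists by uniqueness of the candidate.

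There is essentially no hard step here: the main subtlety is simply to invoke the correct super-stability analogue of the Rural Hospitals theorem so that $S$ (hence $M^{*}$) is well-defined as the only potential candidate. The total running time is dominated by the initial call to the super-stable matching algorithm and the final super-stability verification, both $O(n^{2})$, yielding the claimed bound. \qed
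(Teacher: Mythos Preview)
Your proposal is correct and follows essentially the same approach as the paper: invoke the $O(n^{2})$ super-stable matching algorithm of \cite{ir94}, apply the Rural Hospitals theorem for super-stability \cite{ims00} to identify the unique noncrossing candidate, and then verify super-stability in $O(n^{2})$ time. The paper simply states that ``the same algorithm as in Sect.~\ref{sec:SSNM-SMI} applies,'' whereas you spell out the details (including the acceptability check for the pairs of $M^{*}$), but there is no substantive difference.
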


\begin{corollary}\label{coro:SSNM-strong}
There exists an $O(n^{3})$-time algorithm to find a strong-SSNM or to report that none exists, given an SMTI-instance.
\end{corollary}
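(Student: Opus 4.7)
The plan is to mirror the argument of Proposition~\ref{prop:SSNM}, replacing the two ingredients used there --- a polynomial-time algorithm for finding a stable matching and the Rural Hospitals theorem --- by their strong-stability analogues. First I would invoke the $O(n^{3})$-time algorithm of \cite{kmmp04} for SMTI under strong stability, which either returns a strongly stable matching or correctly reports that none exists. If it reports nonexistence, then \emph{a fortiori} no strong-SSNM can exist (since every strong-SSNM is in particular a strongly stable matching), so the algorithm halts and outputs ``none''.

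Otherwise, I would appeal to the Rural Hospitals theorem for strong stability established in \cite{ims03}, which guarantees that the set $S$ of matched agents is identical across all strongly stable matchings. Any strong-SSNM must therefore match exactly the agents in $S$, and the noncrossing condition then forces the pairing uniquely: pairing the $i$-th matched man from the top with the $i$-th matched woman from the top is the only way to match $S$ without crossings. Hence there is at most one candidate $M^{*}$ for a strong-SSNM, and it can be constructed in $O(n)$ time once $S$ has been identified.

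Finally, I would verify in $O(n^{2})$ time whether $M^{*}$ is strongly stable, by scanning all acceptable pairs $(m,w) \notin M^{*}$ and, with the help of the ranking arrays of Section~\ref{sec:pre}, testing in constant time per pair whether either $w \succeq_{m} M^{*}(m)$ together with $m \succ_{w} M^{*}(w)$, or the symmetric condition, holds. If no such pair is found, $M^{*}$ is output; otherwise we output ``none'', since the uniqueness argument above rules out any other candidate. The total running time is dominated by the $O(n^{3})$ call to the strong-stability algorithm.

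The main (and essentially only) point requiring care is that the two facts borrowed from the literature --- existence testing via \cite{kmmp04} and the invariance of the matched set via \cite{ims03} --- must be invoked in their strong-stability versions, since neither follows automatically from the SMI analogue used in Proposition~\ref{prop:SSNM}. Once these are in hand, the reduction of the noncrossing problem to a single stability check on a uniquely determined matching goes through verbatim.
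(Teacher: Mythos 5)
Your proposal is correct and follows essentially the same route as the paper: invoke the $O(n^{3})$-time strong-stability algorithm of \cite{kmmp04}, use the Rural Hospitals theorem for strong stability \cite{ims03} to fix the set of matched agents, form the unique noncrossing pairing of that set, and verify strong stability of this single candidate. This matches the paper's argument, which applies the procedure of Proposition~\ref{prop:SSNM} verbatim with these two strong-stability ingredients substituted.
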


In contrast, the problem becomes NP-complete for the weak stability even for a highly restricted case:

\begin{theorem}\label{thm:SSNM-weak}
The problem of determining if a weak-SSNM exists, given an SMTI-instance, is NP-complete, even if each person's preference list contains at most two persons and ties appear in only men's preference lists.
\end{theorem}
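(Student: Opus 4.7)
My plan is to prove membership in NP first and then establish NP-hardness by reduction from a suitable variant of 3-SAT. Membership in NP is immediate: given a candidate matching $M$, we can check in $O(n^{2})$ time both that $M$ is noncrossing (by comparing every pair of its edges) and that $M$ admits no weak blocking pair (by scanning every acceptable pair $(m,w) \notin M$, testing the strict-preference condition, and comparing against the edges of $M$ for the noncrossing requirement).

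For NP-hardness I would reduce from 3-SAT, or more likely from a planar variant of 3-SAT, so that the variable--clause incidence structure is compatible with the two-line geometry of noncrossing matchings. For each Boolean variable $x_{i}$ I would construct a variable gadget consisting of a constant number of men and women, in which the men use two-element ties and the women use strict lists of length at most two. The gadget is designed so that, considered locally, the only noncrossing partial matchings admitting no weak blocking pair split into two classes, one encoding $x_{i} = \text{true}$ and the other $x_{i} = \text{false}$. For each clause $C_{j}$ I would construct a clause gadget that can be completed to a global weak-SSNM only when at least one of its incoming literal signals is set to true. The people of all gadgets are then interleaved along the two vertical lines so that the noncrossing constraint both forbids inconsistent literal choices inside a variable gadget and routes the literal signals to the correct clause gadgets; this routing is the place where I would exploit a planar layout of the source formula.

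The main obstacle is the extreme weakness of the local primitives under the stated restrictions. With preference lists of length at most two, every person has only a binary (or unary) choice; with ties confined to the men side, the two sides are asymmetric, which precludes the symmetric long-list gadgets that are typical of SMTI NP-hardness proofs. I expect the technical leverage to come from the fact that a tie in a man's list allows him to be matched to either of two women without contributing a weak blocking pair, so a tie behaves as a logical switch inside a variable gadget, while a woman's strict list of length at most two behaves as a short implication chain that transports the switch settings between gadgets. The heart of the proof will be showing that (i) every satisfying assignment yields a global noncrossing matching with no weak blocking pair, and (ii) no global weak-SSNM can extend an unsatisfying assignment; the delicate part is ensuring that the global linear layout does not introduce spurious noncrossing blocking pairs between distant gadgets, so that correctness depends only on local gadget behaviour and on the planar routing.
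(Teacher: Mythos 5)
Your proposal is a plan rather than a proof: the gadgets are never constructed, and the parts you flag as ``the heart of the proof'' are exactly the parts that are missing. More importantly, the plan rests on a misreading of the stability requirement. A weak-SSNM is a \emph{noncrossing matching that admits no blocking pair at all} -- the blocking pair itself is not required to be noncrossing (that relaxation defines WSNM, not SSNM). So your stated worry about ``spurious \emph{noncrossing} blocking pairs between distant gadgets'' is aimed at the wrong target: any acceptable pair of two people who are single (or strictly prefer each other to their partners) blocks the matching, no matter how far apart they sit on the two lines. This has two consequences for your plan. First, the geometric routing of ``literal signals'' via a planar variant of 3-SAT is unnecessary: you never need matched edges to run between variable and clause gadgets, because an acceptable pair between a variable-gadget man and a clause-gadget woman already acts as a long-distance constraint (it blocks unless at least one of the two is happily matched), regardless of crossings. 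Second, the routing idea is also problematic on its own terms, since planarity of the incidence graph does not give a crossing-free layout of inter-gadget matched edges on two parallel lines, and you give no mechanism to achieve one.

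The paper's reduction (from Tovey's restricted 3SAT, not a planar variant) shows what is actually needed. A \emph{separator} pair $(s,t)$, each listing only the other, is placed between the block of variable gadgets and the block of clause gadgets; $(s,t)$ must be in any weak-SSNM, so no matching edge can join the two blocks, which makes the analysis of the matching purely local to each gadget. Consistency between the (at most two positive and two negative) occurrences of a variable is enforced not by logic inside the preference lists but by geometrically interleaving the two halves of the variable gadget so that only the two ``parallel'' combinations are noncrossing. Clause gadgets (a small one for 2-clauses, a seven-man/nine-woman one for 3-clauses) force at least one of the women $z_{j,k}$ to remain single, and the acceptability of $(p_{i,\ell}, z_{j,k})$ across the separator is what translates ``$z_{j,k}$ single and the corresponding literal false'' into a blocking pair. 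Your intuition that a man's two-element tie acts as a switch while women keep strict length-$2$ lists is consistent with the paper's gadgets, but without the separator, the interleaving trick, the concrete clause gadget, and the verification that no blocking pair (crossing or not) survives in the satisfiable direction, the argument as proposed does not go through.
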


\begin{proof}
Membership in NP is obvious.
We show NP-hardness by a reduction from 3SAT \cite{cook71}.  
Its instance consists of a set of variables and a set of clauses.
Each variable takes either true (1) or false (0).  A {\em literal} is a variable or its negation.
A {\em clause} is a disjunction of at most three literals.
A clause is {\em satisfied} if at least one of its literals takes the value 1, and is {\em unsatisfied} otherwise.
A 0/1 assignment to variables that satisfies all the clauses is called a {\em satisfying assignment}.
An instance $f$ of 3SAT is {\em satisfiable} if it has at least one satisfying assignment; otherwise $f$ is {\em unsatisfiable}.
3SAT asks if there exists a satisfying assignment.
3SAT is NP-complete even if each variable appears at most three times, at most twice positively and at most twice negatively, and each clause contains two or three literals \cite{tov84}.
We use 3SAT instances restricted in this way.

Now we show the reduction.
Let $f$ be an instance of 3SAT having $n$ variables $x_{i} (1 \leq i \leq n)$ and $m$ clauses $C_{j} (1 \leq j \leq m)$.
For $k=2, 3$, we call a clause containing $k$ literals a {\em $k$-clause}.
Suppose that there are $m_{2}$ 2-clauses and $m_{3}$ 3-clauses (thus $m_{2}+m_{3}=m$), and assume without loss of generality that $C_{j} (1 \leq j \leq m_{2})$ are 2-clauses and $C_{j} (m_{2}+1 \leq j \leq m)$ are 3-clauses.

For each variable $x_{i}$, we construct a {\em variable gadget}.
It consists of six men $p_{i,1}$, $p_{i,2}$, $p_{i,3}$, $p_{i,4}$, $a_{i,1}$, and $a_{i,2}$, and four women $q_{i,1}$, $q_{i,2}$, $q_{i,3}$, and $q_{i,4}$.
A variable gadget corresponding to $x_{i}$ is called an $x_{i}$-{\em gadget}.
For each clause $C_{j}$, we construct a {\em clause gadget}.
If $C_{j}$ is a 2-clause, we create one man $y_{j}$ and two women $z_{j,1}$ and $z_{j,2}$.
If $C_{j}$ is a 3-clause, we create seven men $y_{j,k}$ ($1 \leq k \leq 7$) and nine women $v_{j,k}$ ($1 \leq k \leq 6$) and $z_{j,k}$ ($1 \leq k \leq 3$).
A clause gadget corresponding to $C_{j}$ is called a $C_{j}$-{\em gadget}.
Additionally, we create a man $s$ and a woman $t$, who constitute a gadget called the {\em separator}.

Thus, there are $6n+m_{2}+7m_{3}+1$ men and $4n+2m_{2}+9m_{3}+1$ women in the created SMTI-instance, denoted $I(f)$.
Finally, we add dummy persons who have empty preference lists to make the numbers of men and women equal.
They do not play any role in the following arguments, so we omit them.

Suppose that $x_{i}$'s $k$th positive occurrence ($k=1, 2$) is in the $d_{i,k}$th clause $C_{d_{i,k}}$ as the $e_{i,k}$th literal ($1 \leq e_{i,k} \leq 3$).
Similarly, suppose that $x_{i}$'s $k$th negative occurrence ($k=1, 2$) is in the $g_{i,k}$th clause $C_{g_{i,k}}$ as the $h_{i,k}$th literal ($1 \leq h_{i,k} \leq 3$).
Then preference lists of ten persons in the $x_{i}$-gadget are constructed as shown in Fig.~\ref{fig:variable}.
Here, each preference list is described as a sequence from left to right according to preference, i.e., the leftmost person is the most preferred and the rightmost person is the least preferred.
Tied persons (i.e., persons with the equal preference) are included in parentheses.
In the figure, both $z_{g_{i,2}, h_{i,2}}$ and $z_{d_{i,2}, e_{i,2}}$ are written but actually either one is null depending on which polarity of $x_{i}$ occurs once. 
Men are aligned in the order of $p_{i,1}$,  $p_{i,3}$, $a_{i,1}$, $a_{i,2}$, $p_{i,2}$, and $p_{i,4}$ from top to bottom, and women are aligned in the order of $q_{i,1}$, $q_{i,3}$, $q_{i,2}$, and $q_{i,4}$.
(See Fig.~\ref{fig:variable-alignment}. Edges depicted in the figure are those within the variable gadget.)

\begin{figure}[ht]
\begin{center}
\renewcommand\arraystretch{1.4}
\begin{tabular}{lllllllllllllllllllllllllll}
$p_{i,1}$:  & $q_{i,1}$  & $z_{g_{i,1}, h_{i,1}}$ \ & \hspace{15mm} & $q_{i,1}$:  & $a_{i,1}$ & $p_{i,1}$  \\
$a_{i,1}$:  & ($q_{i,1}$  & $q_{i,2}$) &  \  & $q_{i,2}$:  & $a_{i,1}$ & $p_{i,2}$  \\
$p_{i,2}$:  & $q_{i,2}$  & $z_{d_{i,1}, e_{i,1}}$ \ &  \\
$p_{i,3}$:  & $q_{i,3}$  & $z_{g_{i,2}, h_{i,2}}$ \ &  & $q_{i,3}$:  & $a_{i,2}$ & $p_{i,3}$  \\
$a_{i,2}$:  & ($q_{i,3}$  & $q_{i,4}$) &  \ \hspace{15mm} & $q_{i,4}$:  & $a_{i,2}$ & $p_{i,4}$  \\
$p_{i,4}$:  & $q_{i,4}$  & $z_{d_{i,2}, e_{i,2}}$ \ & \\
\end{tabular}
\caption{Preference lists of persons in $x_{i}$-gadget.}\label{fig:variable}
\end{center}
\end{figure}

\begin{figure}[ht]
  \centering
  \includegraphics[width=3cm]{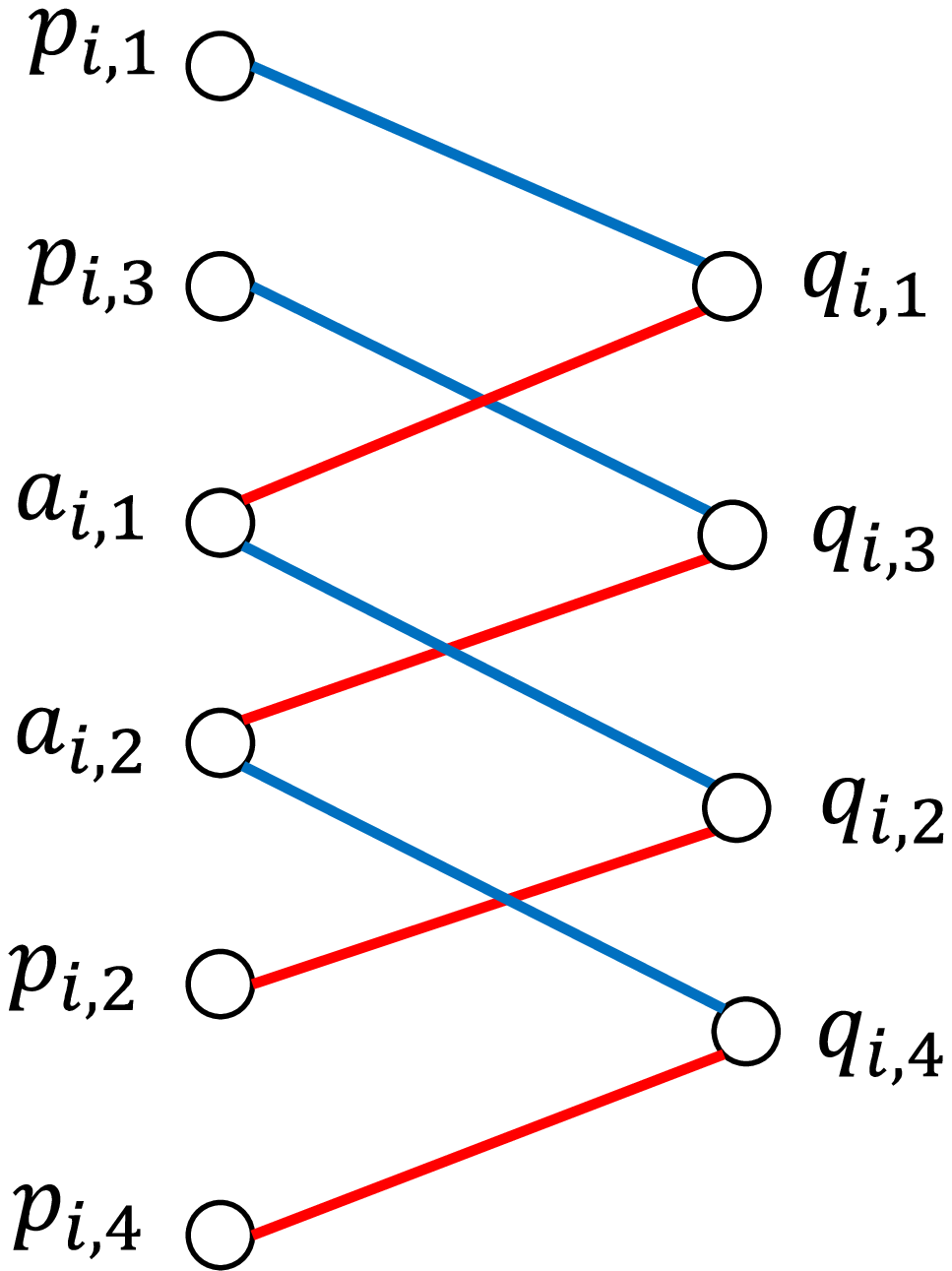}
  \caption{Alignment of agents in a variable gadget.}\label{fig:variable-alignment}
\end{figure}

It might be helpful to explain here intuition behind a variable gadget.
People there are partitioned into two groups, $\{ p_{i,1}, a_{i,1}, p_{i,2}, q_{i,1}, q_{i,2} \}$ and $\{ p_{i,3}, a_{i,2},  p_{i,4}, q_{i,3}, q_{i,4} \}$.
The first group  corresponds to the first positive occurrence and the first negative occurrence of $x_{i}$.
It has two stable matchings $\{ (p_{i,1}, q_{i,1}), (a_{i,1}, q_{i,2}) \}$ (blue in Fig.~\ref{fig:variable-alignment}) and $\{ (a_{i,1}, q_{i,1}),  (p_{i,2}, q_{i,2}) \}$ (red).
We associate the former with the assignment $x_{i}=0$ and the latter with the assignment $x_{i}=1$.
The second group corresponds to the second positive occurrence and the second negative occurrence of $x_{i}$ (if any).
It has two stable matchings $\{ (p_{i,3}, q_{i,3}), (a_{i,2}, q_{i,4}) \}$ (blue) and $\{ (a_{i,2}, q_{i,3}),  (p_{i,4}, q_{i,4}) \}$ (red). 
We associate the former with $x_{i}=0$ and the latter with $x_{i}=1$.
Entanglement of two groups in Fig.~\ref{fig:variable-alignment} plays a role of ensuring consistency of assignments between the first and the second groups.
Depending on the choice of the matching in the first group, edges with the same color must be chosen from the second group to avoid edge-crossing.

Let us continue the reduction.
We then construct preference lists of clause gadgets.
Consider a clause $C_{j}$, and suppose that its $k$th literal is of a variable $x_{j_{k}}$.
Define $\ell_{j,k}$ as
\begin{equation}
\displaystyle \ell_{j,k}   = \\
\begin{cases}
\mbox{1 if this is the 1st negative occurrence of  $x_{j_{k}}$} \\ \nonumber
\mbox{2 if this is the 1st positive occurrence of  $x_{j_{k}}$} \\ \nonumber
\mbox{3 if this is the 2nd negative occurrence of  $x_{j_{k}}$} \\ \nonumber
\mbox{4 if this is the 2nd positive occurrence of  $x_{j_{k}}$}.
\end{cases}
\end{equation}
If $C_{j}$ is a 2-clause (respectively, 3-clause), then the preference lists of persons in the $C_{j}$-gadget are as shown in Fig.~\ref{fig:clause2} (respectively, Fig.~\ref{fig:clause3}).
The alignment order of persons in each clause gadget is the same as in Figs.~\ref{fig:clause2} and \ref{fig:clause3}.
Since a clause gadget for a 3-clause is complicated, we show a structure in the leftmost figure of Fig.~\ref{fig:3-clause-alignment} (three matchings $N_{j,1}$, $N_{j,2}$, and $N_{j,3}$ will be used later).

\begin{figure}[ht]
\begin{center}
\renewcommand\arraystretch{1.2}
\begin{tabular}{lllllllllllllllllllllll}
$y_{j}$:  & ($z_{j,1}$ & $z_{j,2}$) & \hspace{15mm} & $z_{j,1}$:  &  $y_{j}$ & $p_{j_{1},\ell_{j,1}}$ \\
 &  &  & & $z_{j,2}$:  &  $y_{j}$ & $p_{j_{2},\ell_{j,2}}$ \\
\end{tabular}
\caption{Preference lists of persons in $C_{j}$-gadget ($1 \leq j \leq m_{2}$).}\label{fig:clause2}
\end{center}
\end{figure}

\begin{figure}[ht]
\begin{center}
\renewcommand\arraystretch{1.2}
\begin{tabular}{lllllllllllllllllllllll}
$y_{j,1}$:  & ($v_{j,1}$ & $v_{j,3}$) & \hspace{15mm} & $v_{j,1}$:  &  $y_{j,1}$ & \\
$y_{j,2}$:  & ($v_{j,2}$ & $z_{j,1}$) & & $v_{j,2}$:  &  $y_{j,2}$ & \\
$y_{j,3}$:  & ($v_{j,3}$ & $v_{j,4}$) & & $v_{j,3}$:  &  $y_{j,1}$ & $y_{j,3}$ \\
$y_{j,4}$:  & ($z_{j,2}$ & $v_{j,5}$) & & $z_{j,1}$:  &  $y_{j,2}$ & $p_{j_{1},\ell_{j,1}}$ \\
$y_{j,5}$:  & ($v_{j,4}$ & $v_{j,6}$) & & $z_{j,2}$:  &  $y_{j,4}$ & $p_{j_{2},\ell_{j,2}}$  \\
$y_{j,6}$:  & ($v_{j,5}$ & $z_{j,3})$ & & $v_{j,4}$:  &  $y_{j,5}$ & $y_{j,3}$  \\
$y_{j,7}$:  & $v_{j,6}$ & & & $v_{j,5}$:  &  $y_{j,6}$ & $y_{j,4}$  \\
 &  &  & & $v_{j,6}$:  &  $y_{j,5}$ & $y_{j,7}$ \\
 &  &  & & $z_{j,3}$:  &  $y_{j,6}$ & $p_{j_{3},\ell_{j,3}}$ \\
\end{tabular}
\caption{Preference lists of persons in $C_{j}$-gadget ($m_{2}+1 \leq j \leq m$).}\label{fig:clause3}
\end{center}
\end{figure}

\begin{figure}[ht]
  \centering
  \includegraphics[width=11.5cm]{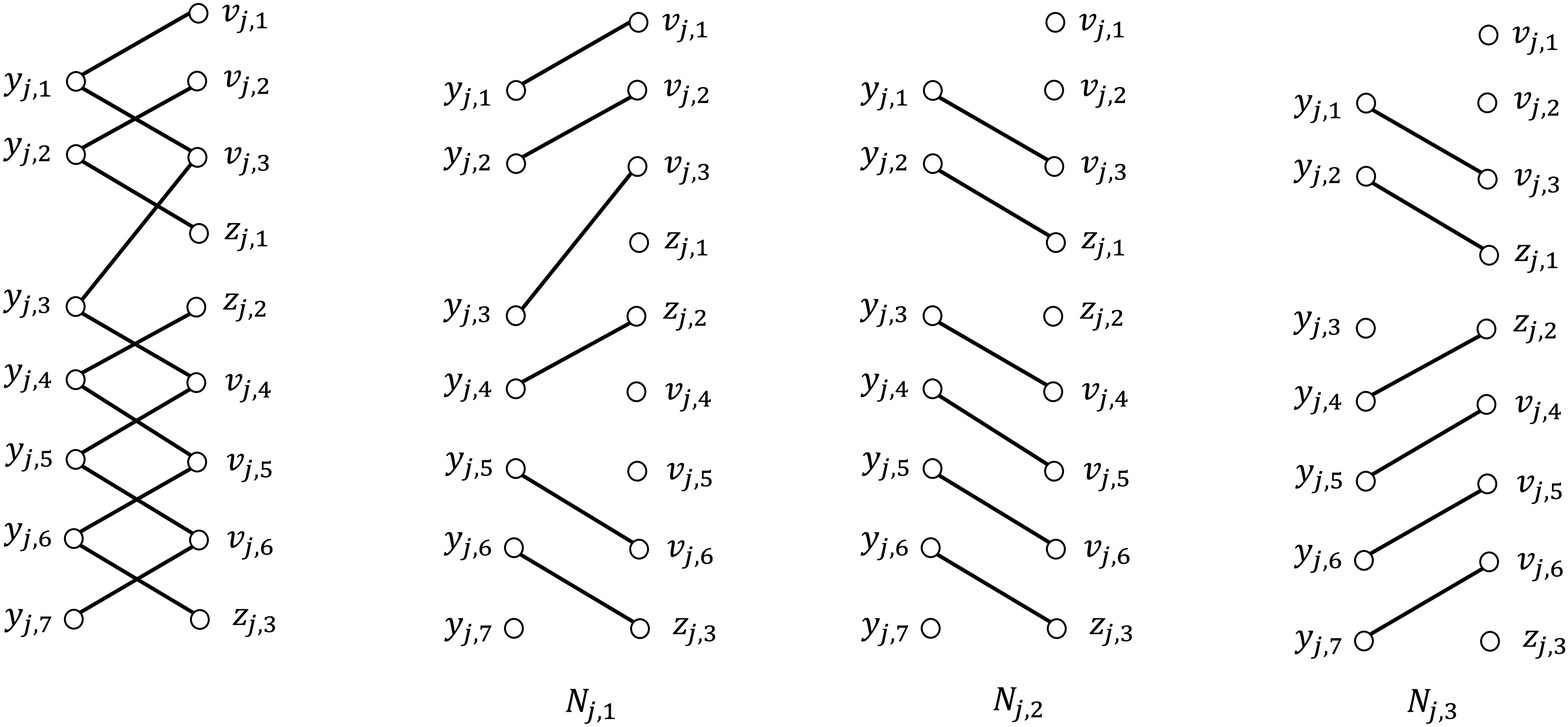}
  \caption{Acceptability graph of a 3-clause gadget $C_{j}$ and its matchings $N_{j,1}$, $N_{j,2}$, and $N_{j,3}$.}\label{fig:3-clause-alignment}
\end{figure}

Finally, each of the man and the woman in the separator includes only the other in the list (Fig.~\ref{fig:separator}).  They are guaranteed to be matched together in any stable matching.

\begin{figure}[ht]
\begin{center}
\renewcommand\arraystretch{1.2}
\begin{tabular}{llllll}
$s$: \ & $t$ & \hspace{15mm} & $t$: \ &  $s$
\end{tabular}
\caption{Preference lists of the man and the woman in the separator.}\label{fig:separator}
\end{center}
\end{figure}

Alignment of the whole instance is depicted in Fig.~\ref{fig:alignment}.
Variable gadgets are placed top, then followed by the separator, clause gadgets come bottom.
The separator plays a role of prohibiting a person of a variable gadget to be matched with a person of a clause gadget; if they are matched, then the corresponding edge crosses the separator.

\begin{figure}[ht]
  \centering
  \includegraphics[width=3.2cm]{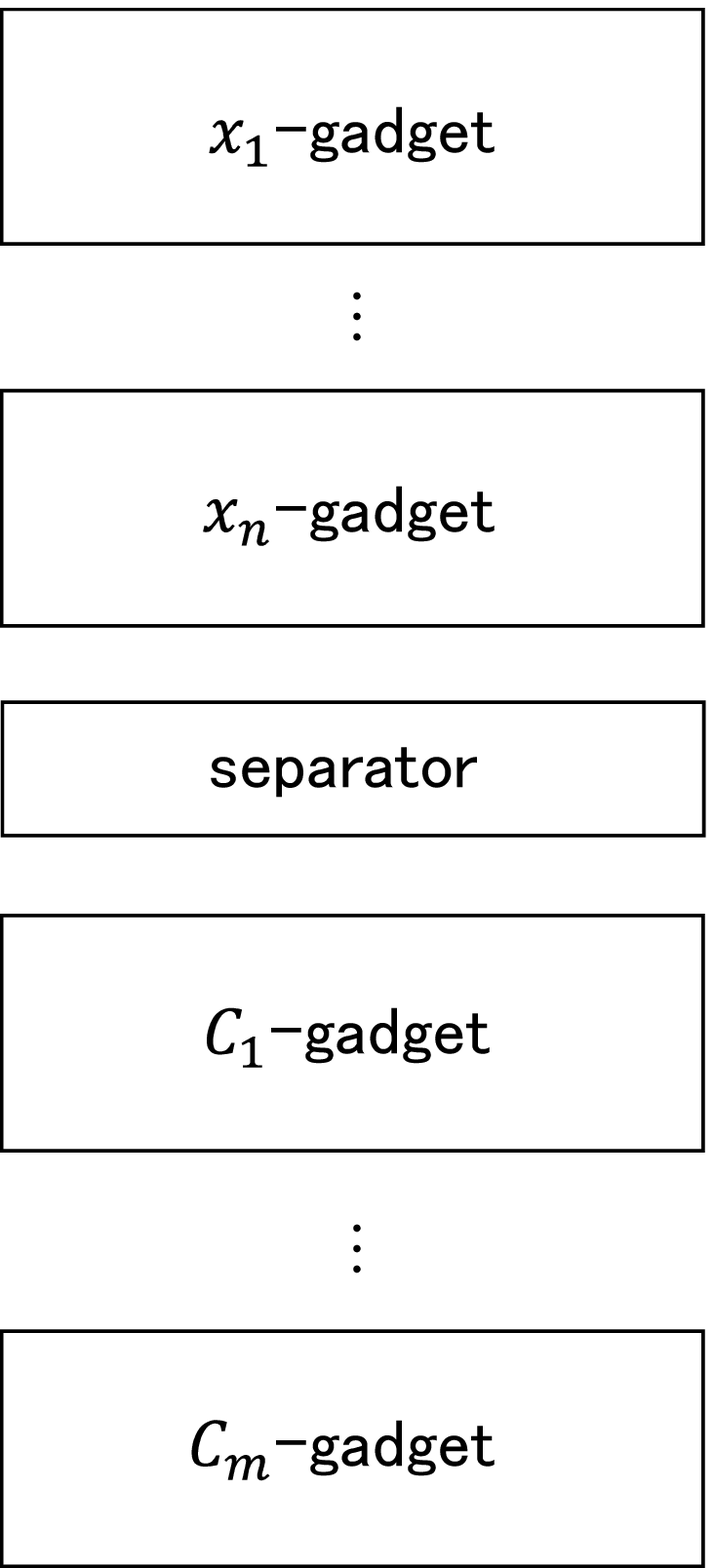}
  \caption{Alignment of agents.}\label{fig:alignment}
\end{figure}

Now the reduction is completed.
It is not hard to see that the reduction can be performed in polynomial time and the conditions on the preference lists stated in the theorem are satisfied.

We then show the correctness.
First, suppose that $f$ is satisfiable and let $A$ be a satisfying assignment.  
We construct a weak-SSNM $M$ of $I(f)$ from $A$.
For an $x_{i}$-gadget, define two matchings 

\begin{itemize}

\item $M_{i,0}= \{ (p_{i,1}, q_{i,1}), (a_{i,1}, q_{i,2}), (p_{i,3}, q_{i,3}), (a_{i,2}, q_{i,4}) \}$ (blue in Fig.~\ref{fig:variable-alignment}) and 

\item $M_{i,1}= \{ (a_{i,1}, q_{i,1}), (p_{i,2}, q_{i,2}), (a_{i,2}, q_{i,3}), (p_{i,4}, q_{i,4}) \}$ (red in Fig.~\ref{fig:variable-alignment}).

\end{itemize}

\noindent
If $x_{i}=0$ under $A$, then add $M_{i,0}$ to $M$; otherwise, add $M_{i,1}$ to $M$.
For a $C_{j}$-gadget ($1 \leq j \leq m_{2}$), we define two matchings $N_{j,1}= \{ (y_{j}, z_{j,2}) \}$ and $N_{j,2}= \{ (y_{j}, z_{j,1}) \}$.
For a $C_{j}$-gadget ($m_{2}+1 \leq j \leq m$), we define three matchings

\begin{itemize}
\item $N_{j,1}= \{ (y_{j,1}, v_{j,1}), (y_{j,2}, v_{j,2}), (y_{j,3}, v_{j,3}), (y_{j,4}, z_{j,2}), (y_{j,5}, v_{j,6}), (y_{j,6}, z_{j,3}) \}$, 

\item $N_{j,2}= \{ (y_{j,1}, v_{j,3}), (y_{j,2}, z_{j,1}), (y_{j,3}, v_{j,4}), (y_{j,4}, v_{j,5}), (y_{j,5}, v_{j,6}), (y_{j,6}, z_{j,3}) \}$, and 

\item $N_{j,3}= \{ (y_{j,1}, v_{j,3}), (y_{j,2}, z_{j,1}), (y_{j,4}, z_{j,2}), (y_{j,5}, v_{j,4}), (y_{j,6}, v_{j,5}), (y_{j,7}, v_{j,6}) \}$,
\end{itemize}

\noindent
that are depicted in Fig.~\ref{fig:3-clause-alignment}.
Note that, for each $k(=1, 2, 3)$, only $z_{j,k}$ (among $z_{j,1}$, $z_{j,2}$, and $z_{j,3}$) is single in $N_{j,k}$.
If $C_{j}$ is satisfied by the $k$th literal ($k=1, 2, 3$), then add $N_{j,k}$ to $M$.
(If $C_{j}$ is satisfied by more than one literal, then choose one arbitrarily.)
Finally add the pair $(s, t)$ to $M$.

It is not hard to see that $M$ is noncrossing.
We show that it is weakly stable.
Clearly, neither $s$ nor $t$ in the separator forms a blocking pair.
Next, consider the $x_{i}$-gadget.
In $M_{i,0}$, women $q_{i,2}$ and $q_{i,4}$ are matched with the first-choice man.
The woman $q_{i,1}$ is matched with the second-choice man $p_{i,1}$ but her first-choice man $a_{i,1}$ is matched with a first-choice woman $q_{i,2}$.
Similarly, $q_{i,3}$'s first-choice man $a_{i,2}$ is matched with a first-choice woman $q_{i,4}$.
Men $p_{i,1}$, $a_{i,1}$, $p_{i,3}$, and $a_{i,2}$ are matched with a first-choice woman.
Hence these persons cannot be a part of a blocking pair; only $p_{i,2}$ and $p_{i,4}$ may participate in a blocking pair.
Similarly, we can argue that, in $M_{i,1}$, only $p_{i,1}$ and $p_{i,3}$ may participate in a blocking pair.

For a $C_{j}$-gadget ($1 \leq j \leq m_{2}$), in either $N_{j,1}$ and $N_{j,2}$, both the matched persons obtain a first-choice partner.
Consider a $C_{j}$-gadget ($m_{2}+1 \leq j \leq m$).
In $N_{j,1}$, all the men except for $y_{j,7}$ are matched with a first-choice woman.
$y_{j,7}$'s unique choice $v_{j,6}$ is matched with the first-choice man $y_{j,5}$.
Hence no man in this gadget can participate in a blocking pair, and so no blocking pair exists within this gadget.
Since $z_{j,2}$ and $z_{j,3}$ are matched with their respective first-choice woman, only the possibility is that $z_{j,1}$ forms a blocking pair with $p_{j_{1},\ell_{j,1}}$ of a variable gadget.
The same observation applies for $N_{j,2}$ and $N_{j,3}$ and we can see that for each $k(=1, 2, 3)$ only $z_{j,k}$ can participate in a blocking pair in $N_{j,k}$.

To summarize, if there exists a blocking pair, it must be of the form $(p_{i,\ell}, z_{j,k})$ for some $i, \ell, j$, and $k$, and both $p_{i,\ell}$ and $z_{j,k}$ are single in $M$.
Suppose that $\ell=1$.
The reason for $(p_{i,1}, z_{j,k})$ being an acceptable pair is that $C_{j}$'s $k$th literal is $\lnot{x_{i}}$, a negative occurrence of $x_{i}$.
Since $p_{i,1}$ is single, $M_{i,1} \subset M$ and hence $x_{i}=1$ under $A$.
Since $z_{j,k}$ is single, $N_{j,k} \subset M$ and hence $C_{j}$ is satisfied by its $k$th literal $\lnot{x_{i}}$, but this is a contradiction.
The other cases $\ell=2, 3, 4$ can be argued in the same manner, and we can conclude that $M$ is stable.

Conversely, suppose that $I(f)$ admits a weak-SSNM $M$.
We construct a satisfying assignment $A$ of $f$.
Before giving construction, we observe structural properties of $M$ in two lemmas:

\begin{lemma}\label{lemma:variable}
For each $i$ ($1 \leq i \leq n$), either $M_{i,0} \subset M$ or $M_{i,1} \subset M$.
\end{lemma}

\begin{proof}
Note that preference lists of the ten persons of the $x_{i}$-gadget include persons of the same $x_{i}$-gadget or some persons from clause gadgets.
Hence, due to the separator, persons of the $x_{i}$-gadget can only be matched within this gadget to avoid edge-crossings.

Note that a stable matching is a maximal matching.
With regard to $p_{i,1}$, $a_{i,1}$, $p_{i,2}$, $q_{i,1}$, and $q_{i,2}$, there are three maximal matchings $\{ (p_{i,1}, q_{i,1}), (a_{i,1}, q_{i,2}) \}$, $\{ (a_{i,1}, q_{i,1}),  (p_{i,2}, q_{i,2}) \}$, and $\{ (p_{i,1}, q_{i,1}),  (p_{i,2}, q_{i,2}) \}$, but the last one is blocked by $ (a_{i,1}, q_{i,1})$ and $(a_{i,1}, q_{i,2})$.
Hence either the first or the second one must be in $M$.
With regard to $p_{i,3}$, $a_{i,2}$, $p_{i,4}$, $q_{i,3}$, and $q_{i,4}$, there are three maximal matchings $\{ (p_{i,3}, q_{i,3}), (a_{i,2}, q_{i,4}) \}$, $\{ (a_{i,2}, q_{i,3}), (p_{i,4}, q_{i,4}) \}$, and $\{ (p_{i,3}, q_{i,3}),  (p_{i,4}, q_{i,4}) \}$, but the last one is blocked by $ (a_{i,2}, q_{i,3})$ and $(a_{i,2}, q_{i,4})$.
Hence either the first or the second one must be in $M$.

If we choose $\{ (p_{i,1}, q_{i,1}), (a_{i,1}, q_{i,2}) \}$, then we must choose $\{ (p_{i,3}, q_{i,3}), (a_{i,2}, q_{i,4}) \}$ to avoid edge-crossing, which constitute $M_{i,0}$.
If we choose $\{ (a_{i,1}, q_{i,1}), (p_{i,2}, q_{i,2}) \}$, then we must choose $\{ (a_{i,2}, q_{i,3}), (p_{i,4}, q_{i,4}) \}$, which constitute $M_{i,1}$.
Hence either $M_{i,0}$ or $M_{i,1}$ must be a part of $M$.
\qed
\end{proof}

\begin{lemma}\label{lemma:clause}
(i) For a $C_{j}$-gadget ($1 \leq j \leq m_{2}$), at least one of $z_{j,1}$ and $z_{j,2}$ is unmatched in $M$.
(ii) For a $C_{j}$-gadget ($m_{2}+1 \leq j \leq m$), at least one of $z_{j,1}$, $z_{j,2}$, and $z_{j,3}$ is unmatched in $M$.
\end{lemma}

\begin{proof}
(i) Note that preference lists of the three persons of the $C_{j}$-gadget include persons of the same $C_{j}$-gadget or some persons from variable gadgets.
To avoid edge-crossing, persons must be matched within the same $C_{j}$-gadget.
Then it is impossible that both $z_{j,1}$ and $z_{j,2}$ are matched in $M$.

(ii) Again, we note that people must be matched within the same $C_{j}$-gadget.
For contradiction, suppose that all $z_{j,1}$, $z_{j,2}$, and $z_{j,3}$ are matched in $M$.
Then $(y_{j,2}, z_{j,1})$, $(y_{j,4}, z_{j,2})$, and $(y_{j,6}, z_{j,3})$ are in $M$ (Fig.~\ref{fig:proofsup}(1)).
To avoid edge-crossing, $(y_{j,3}, v_{j,3})$, $(y_{j,3}, v_{j,4})$, and $(y_{j,7}, v_{j,6})$ must not be in $M$ (Fig.~\ref{fig:proofsup}(2)).
The pair $(y_{j,5}, v_{j,4})$ must be in $M$ as otherwise $(y_{j,3}, v_{j,4})$ is a blocking pair (Fig.~\ref{fig:proofsup}(3)).
For $M$ to be a matching, $(y_{j,5}, v_{j,6})$ must not be in $M$ (Fig.~\ref{fig:proofsup}(4)).
Then  $(y_{j,7}, v_{j,6})$ is a blocking pair, a contradiction.
\qed
\end{proof}

\begin{figure}[ht]
  \centering
  \includegraphics[width=11.5cm]{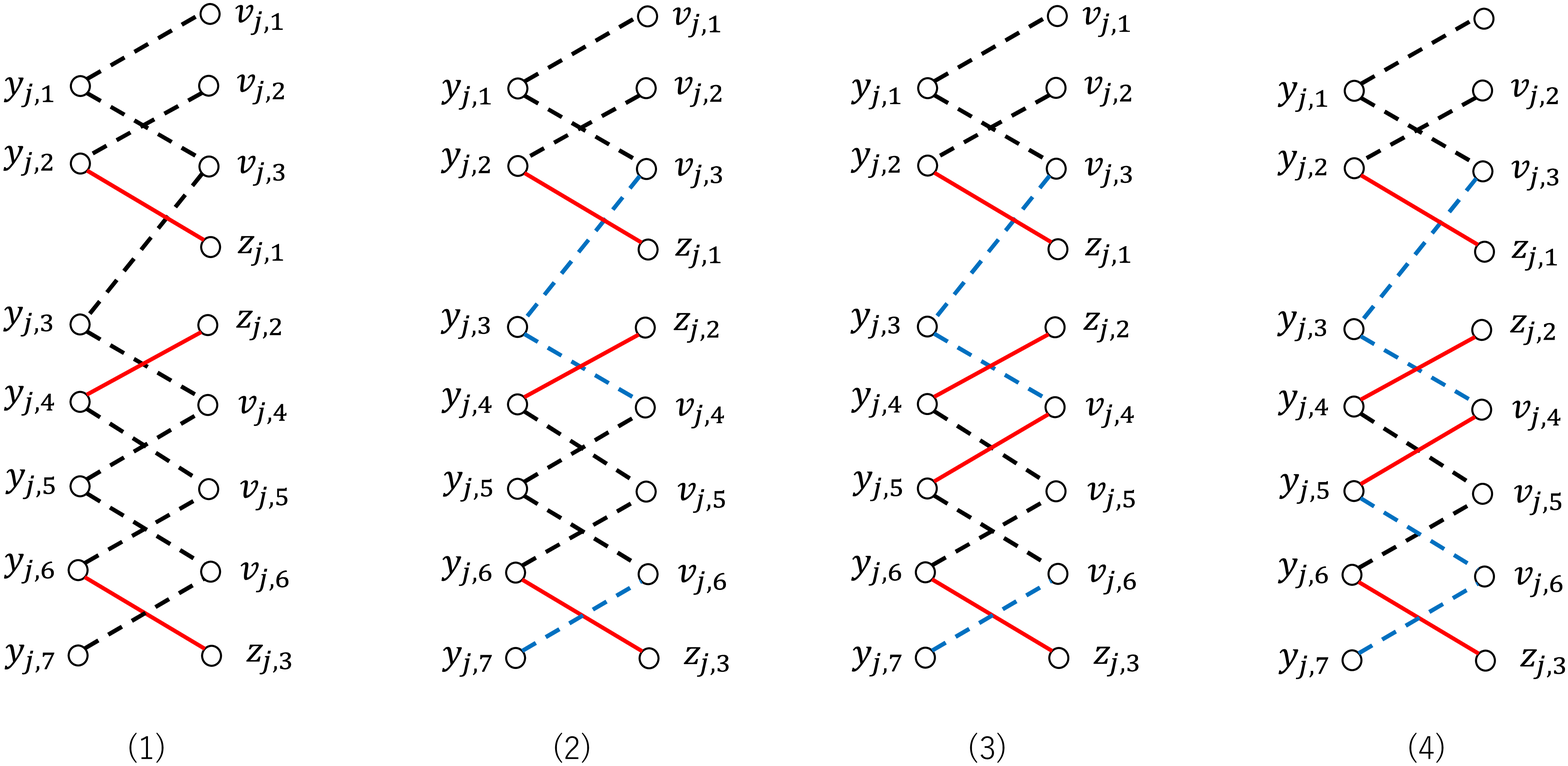}
  \caption{Situation in the proof of Lemma \ref{lemma:clause}(ii). Red solid edges are those confirmed to be in $M$, blue dashed edges are those confirmed not to be in $M$, and black dashed edges are uncertain.}\label{fig:proofsup}
\end{figure}

For each $i$, either $M_{i,0} \subset M$ or $M_{i,1} \subset M$ holds by Lemma \ref{lemma:variable}.
If $M_{i,0} \subset M$ holds then we set $x_{i}=0$ in $A$, and if $M_{i,1} \subset M$ holds then we set $x_{i}=1$ in $A$.
We show that $A$ satisfies $f$.
Let $C_{j}$ be an arbitrary clause.
In the following, we assume that $C_{j}$ is a 3-clause, but the same argument holds if $C_{j}$ is a 2-clause.
By Lemma \ref{lemma:clause}(ii), at least one of $z_{j,1}$, $z_{j,2}$, and $z_{j,3}$ is unmatched in $M$.
If there are two or more unmatched women, then choose one arbitrarily and let this woman be $z_{j,k}$.
We show that $C_{j}$ is satisfied by its $k$th literal.
Suppose not.

First suppose that the $k$th literal of $C_{j}$ is the first positive occurrence of $x_{i}$.
Then, by construction of preference lists, $(p_{i,2}, z_{j,k})$ is an acceptable pair.
If $x_{i}=0$ under $A$, then $M_{i,0} \subset M$ by construction of $A$, and hence $p_{i,2}$ is single in $M$.
Thus $(p_{i,2}, z_{j,k})$ is a blocking pair, which contradicts stability of $M$.
Hence $x_{i}=1$ under $A$ and $C_{j}$ is satisfied by $x_{i}$.
When the $k$th literal of $C_{j}$ is the second positive occurrence of $x_{i}$, the same argument holds if we replace $p_{i,2}$ by $p_{i,4}$.

Next suppose that the $k$th literal of $C_{j}$ is the first negative occurrence of $x_{i}$.
Then, by construction of preference lists, $(p_{i,1}, z_{j,k})$ is an acceptable pair.
If $x_{i}=1$ under $A$, then $M_{i,1} \subset M$ by construction of $A$, and hence $p_{i,1}$ is single in $M$.
Thus $(p_{i,1}, z_{j,k})$ is a blocking pair, which contradicts stability of $M$.
Hence $x_{i}=0$ under $A$ and $C_{j}$ is satisfied by $\lnot x_{i}$.
If the $k$th literal of $C_{j}$ is the second negative occurrence of $x_{i}$, the same argument holds if we replace $p_{i,1}$ by $p_{i,3}$.
Thus $A$ is a satisfying assignment of $f$ and the proof is completed.
\qed
\end{proof}

Next we give a positive result.

\begin{theorem}\label{thm:SSNM-weak-positive}
The problem of determining if a weak-SSNM exists, given an SMTI-instance, is solvable in $O(n)$-time if each man's preference list contains at most one woman.
\end{theorem}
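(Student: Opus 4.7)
The plan is to reduce the problem to a classical greedy selection. First I would observe that since each man's preference list has length at most one, every matching $M$ has a very restricted form: each man $m$ is either matched to his unique acceptable woman $f(m)$ (if one exists) or unmatched. Writing $A(w)$ for the set of men whose (unique) acceptable woman is $w$, which coincides with $w$'s own preference list by mutual acceptability, the next step is to characterize weak stability in this special setting.

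The key structural claim is that a noncrossing matching $M$ is weak-stable if and only if (i) every woman $w$ with $A(w) \neq \emptyset$ is matched, and (ii) her partner $M(w)$ lies in the top tie $T(w)$ of her preference list. The argument is short: any $m \in A(w) \setminus \{M(w)\}$ is necessarily single (his only possible partner is $w$), so he would form a weak blocking pair with $w$ unless $M(w) \succeq_w m$; and if $w$ were single while $A(w) \neq \emptyset$, the same reasoning produces a blocking pair with any $m \in A(w)$. Conversely, a woman with $A(w) = \emptyset$ must be single for free, and the two listed conditions eliminate every possible blocking pair, since a blocking pair would necessarily involve a single man whose only acceptable woman lies in the construction above.

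Given this characterization, the task becomes: pick one representative $\mu(w) \in T(w)$ for each woman with $A(w) \neq \emptyset$, such that when women are listed by index the chosen men also appear in strictly increasing index order (which is exactly the noncrossing condition for pairs in the matching). Distinctness of the chosen men is automatic, because the sets $T(w)$ are pairwise disjoint (a man belongs to $A(w)$ for at most one $w$). I would solve this selection by a single left-to-right sweep over $w_1, \dots, w_n$ that maintains the index $p$ of the previously chosen man and, upon meeting a $w$ with $A(w) \neq \emptyset$, picks the smallest-index member of $T(w)$ whose index exceeds $p$, reporting \textsc{No} if no such man exists. A standard exchange argument shows this greedy succeeds whenever a feasible selection exists: for any alternative feasible assignment, induction on the prefix shows that the greedy's $s$-th choice has index no larger than the alternative's, so if the greedy ever gets stuck the alternative would be stuck too.

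For the $O(n)$ running time the observation is that the total input size is itself $O(n)$: every man lies in at most one woman's preference list, so the sum of list lengths is $O(n)$. The top tie $T(w)$ of each woman, with members pre-sorted by man-index, can be built in $O(n)$ total by one left-to-right pass over the men while flagging top-tie membership; the greedy sweep then visits each element of each $T(w)$ at most once. The only step requiring real care is the weak-stability characterization in the second paragraph; once that is in place, the greedy and its analysis are entirely routine.
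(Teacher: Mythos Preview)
Your proposal is correct and follows essentially the same approach as the paper: the paper's Lemma~3 is exactly your structural characterization (the paper phrases it via the bipartite graph $G_I$ whose edges are the pairs $(m,w)$ with $m$ a first-choice of $w$, i.e., your sets $T(w)$), and the paper's algorithm \textsc{Weak-SSNM-1} is precisely your left-to-right greedy with the same exchange-style correctness argument and $O(n)$ analysis.
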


\begin{proof}
Let $I$ be an input SMTI-instance.
First, we construct a bipartite graph $G_{I}=(U_{I}, V_{I}, E_{I})$, where $U_{I}$ and $V_{I}$ correspond to the sets of men and women in $I$, respectively, and $(m,w) \in E_{I}$ if and only if $m$ is a first-choice of $w$.
For a vertex $v \in V_{I}$, let $d(v)$ denote its degree in $G_{I}$.
Since acceptability is mutual, if a woman $w$'s preference list in $I$ is nonempty, $d(w) \geq 1$ holds.
Note that it can happen that $d(w) \geq 2$ because preference lists may contain ties.
In the following lemma, we characterize (not necessarily noncrossing) stable matchings of $I$. 

\begin{lemma}\label{lemma:3}
$M$ is a stable matching of $I$ if and only if $M \subseteq E_{I}$ and each woman $w \in V_{I}$ such that $d(w) \geq 1$ is matched in $M$.
\end{lemma}

\begin{proof}
Suppose that $M$ is stable.  
If $M \not\subseteq E_{I}$, there is an edge $(m,w) \in M \setminus E_{I}$.
The fact $(m,w) \not\in E_{I}$ means that $m$ is not $w$'s first-choice so there is an edge $(m', w) \in E_{I}$ such that $m' \succ_{w} m$.
Since $(m, w) \in M$, $m'$ is single in $M$.
Therefore, $(m', w)$ is a blocking pair for $M$, a contradiction.
If there is a woman $w \in V_{I}$ such that $d(w) \geq 1$ but $w$ is single in $M$, then any man $m$ such that $(m,w)$ is an acceptable pair is a blocking pair because $m$ is also single in $M$, a contradiction.

Conversely, suppose that $M$ satisfies the conditions of the right hand side.
Then each woman who has a nonempty list is matched with a first-choice man, so there cannot be a blocking pair.
\qed
\end{proof}

By Lemma \ref{lemma:3}, our task is to select from $E_{I}$ one edge per woman $w$ such that $d(w) \geq 1$, in such a way that the resulting matching is noncrossing.
We do this greedily.
$M$ is initially empty, and we add edges to $M$ by processing vertices of $V_{I}$ from top to bottom.
At $w_{i}$'s turn, if $d(w_{i}) \geq 1$, then choose the topmost edge that does not cross any edge in $M$, and add it to $M$.
If there is no such edge, then we immediately conclude that $I$ admits no weak-SSNM.
If we can successfully process all the women, we output the final matching $M$, which is a weak-SSNM.

In the following, we formalize the above idea.
A pseudo-code of the whole algorithm {\sc Weak-SSNM-1} is given in Algorithm~\ref{alg:1}.

\begin{algorithm}[htb]
  \caption{{\sc Weak-SSNM-1}}\label{alg:1}
  \begin{algorithmic}[1]
   \REQUIRE An SMTI-instance $I$.
   \ENSURE A weak-SSNM $M$ or ``No'' if none exists.
   \STATE Construct a bipartite graph $G_{I}=(U_{I}, V_{I}, E_{I})$.\label{step:3}
   \STATE Let $M:= \emptyset$.
   \FOR{$i=1$ to $n$}
     \IF{$d(w_{i}) \geq 1$}
       \STATE Let $j^{*}$ (if any) be the smallest $j$ such that $(m_{j}, w_{i}) \in E_{I}$ and $M \cup \{(m_{j}, w_{i}) \}$ is a noncrossing matching.\label{step:10}
       \STATE Let $M := M \cup \{ (m_{j^{*}}, w_{i}) \}$.
            \IF{no such $j^{*}$ exists}
              \STATE Output ``No'' and halt.
            \ENDIF
     \ENDIF
   \ENDFOR
   \STATE Output $M$.
  \end{algorithmic}
 \end{algorithm}

We show the correctness.
Suppose that {\sc Weak-SSNM-1} outputs a matching $M$.
$M$ is noncrossing by the condition of line \ref{step:10}, and $M$ is stable because the construction of $M$ follows the condition of Lemma \ref{lemma:3}.

Conversely, suppose that $I$ admits a weak-SSNM $M^{*}$.
We show that {\sc Weak-SSNM-1} outputs a matching.
Suppose not, and suppose that {\sc Weak-SSNM-1} failed when processing woman (vertex) $w_{k}$.
Let $\bar{M}$ be the matching constructed so far by {\sc Weak-SSNM-1}.
Then for each $i$ ($1 \leq i \leq k-1$), $w_{i}$ is single in $M^{*}$ if and only if she is single in $\bar{M}$.
Also, since $M^{*} \subseteq E_{I}$ by Lemma \ref{lemma:3}, we can show by a simple induction that for each $i$ ($1 \leq i \leq k-1$), if $M^{*}(w_{i})=m_{p}$ and $\bar{M}(w_{i})=m_{q}$, then $q \leq p$.
Then, at line \ref{step:10}, we could have chosen $(M^{*}(w_{k}), w_{k})$ to add to $\bar{M}$, a contradiction.

Finally, we consider time-complexity.
Since the preference list of each man contains at most one woman, the graph $G_{I}$ at line \ref{step:3} can be constructed in $O(n)$-time and contains at most $n$ edges.
The {\bf for}-loop can be executed in $O(n)$-time because each edge is scanned at most once in the loop; whether or not an edge crosses edges of $M$ at line \ref{step:10} can be done in constant time by keeping the maximum index of the matched men in $M$ at any stage.
\qed
\end{proof}

\section{Maximum Cardinality Weakly Stable Noncrossing Matchings}\label{sec:WSNM}

In this section, we present an algorithm to find a maximum cardinality WSNM.
For an instance $I$, let $opt(I)$ denote the size of the maximum cardinality WSNM.

\subsection{SMI}\label{sec:WSNM-SMI}

Let $I'$ be a given instance with men $m_{1}, \ldots, m_{n}$ and women $w_{1}, \ldots, w_{n}$.
To simplify the description of the algorithm, we translate $I'$ to an instance $I$ by adding a man $m_{0}$ and a woman $w_{0}$, each of whom includes only the other in the preference list, and similarly a man $m_{n+1}$ and a woman $w_{n+1}$, each of whom includes only the other in the preference list.
It is easy to see that, for a WSNM $M'$ of $I'$, $M= M' \cup \{ (m_{0}, w_{0}), (m_{n+1}, w_{n+1}) \}$ is a WSNM of $I$.
Conversely, any WSNM $M$ of $I$ includes the pairs $(m_{0}, w_{0})$ and $(m_{n+1}, w_{n+1})$, and $M' = M \setminus \{ (m_{0}, w_{0}), (m_{n+1}, w_{n+1}) \}$ is a WSNM of $I'$.
Thus we have that $opt(I)=opt(I')+2$.
Hence, without loss of generality, we assume that a given instance $I$ has $n+2$ men and $n+2$ women, with $m_{0}$, $w_{0}$, $m_{n+1}$, and $w_{n+1}$ having the above mentioned preference lists.

Let $M = \{ (m_{i_{1}}, w_{j_{1}}), (m_{i_{2}}, w_{j_{2}}), \ldots, (m_{i_{k}}, w_{j_{k}}) \}$ be a noncrossing matching of $I$ such that $i_{1} < i_{2} \cdots < i_{k}$ and $j_{1} < j_{2} \cdots < j_{k}$.  We call $(m_{i_{k}}, w_{j_{k}})$ the {\em maximum pair} of $M$.
Suppose that $(m_{x}, w_{y})$ is the maximum pair of a noncrossing matching $M$.
We call $M$ a {\em semi-WSNM} if each of its noncrossing blocking pair $(m_{i}, w_{j})$ (if any) satisfies $x \leq i \leq n+1$ and $y \leq j \leq n+1$.
Intuitively, a semi-WSNM is a WSNM up to its maximum pair.
Note that any semi-WSNM must contain $(m_{0}, w_{0})$, as otherwise it is a noncrossing blocking pair.
For $0 \leq i \leq n+1$ and $0 \leq j \leq n+1$, we define $X(i,j)$ as the maximum size of a semi-WSNM of $I$ whose maximum pair is $(m_{i}, w_{j})$; if $I$ does not admit a semi-WSNM with the maximum pair $(m_{i}, w_{j})$, $X(i,j)$ is defined to be $-\infty$.

\begin{lemma}\label{lm:1}
$opt(I) = X(n+1, n+1)$.
\end{lemma}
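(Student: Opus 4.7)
My plan is to show that the set of WSNMs of $I$ coincides exactly with the set of semi-WSNMs of $I$ whose maximum pair is $(m_{n+1}, w_{n+1})$, so that the maximum cardinalities over these two sets must agree.

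The first step is to observe that any WSNM $M$ of $I$ must contain the pair $(m_{n+1}, w_{n+1})$. Indeed, $m_{n+1}$ and $w_{n+1}$ find only each other acceptable; if $(m_{n+1}, w_{n+1}) \notin M$, then both are single, and $(m_{n+1}, w_{n+1})$ is an acceptable pair with strict preference over being single for both. Moreover, since $m_{n+1}$ and $w_{n+1}$ are the bottommost agents on each side, the segment $(m_{n+1}, w_{n+1})$ cannot be crossed by any other edge in $M$, so it is a noncrossing blocking pair, contradicting weak stability. Because $(m_{n+1}, w_{n+1})$ has the largest possible indices on both sides, it is necessarily the maximum pair of $M$. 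Since $M$ has no noncrossing blocking pairs at all, the defining condition for being a semi-WSNM is vacuously met. Hence every WSNM is a semi-WSNM whose maximum pair is $(m_{n+1}, w_{n+1})$.

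The second step is the converse: any semi-WSNM $M$ of $I$ whose maximum pair is $(m_{n+1}, w_{n+1})$ is already a WSNM. By definition, every noncrossing blocking pair $(m_i, w_j)$ of $M$ must satisfy $n+1 \le i \le n+1$ and $n+1 \le j \le n+1$, forcing $(i,j) = (n+1, n+1)$. But $(m_{n+1}, w_{n+1}) \in M$, so it cannot be a blocking pair of $M$. Therefore $M$ admits no noncrossing blocking pair, and since $M$ is already noncrossing, it is a WSNM.

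Combining the two inclusions, the set of WSNMs of $I$ equals the set of semi-WSNMs of $I$ with maximum pair $(m_{n+1}, w_{n+1})$. Taking maxima of $|M|$ over these equal sets yields $opt(I) = X(n+1, n+1)$. The only conceptual subtlety is verifying that $(m_{n+1}, w_{n+1})$ really is the maximum pair of any WSNM — this follows from the index maximality of $m_{n+1}$ and $w_{n+1}$, together with the forced presence of this pair in every WSNM — but otherwise the argument is a direct unpacking of definitions and should not present any significant obstacle.
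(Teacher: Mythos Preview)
Your proof is correct and follows essentially the same approach as the paper's: you show that the set of WSNMs of $I$ coincides with the set of semi-WSNMs whose maximum pair is $(m_{n+1},w_{n+1})$, using the forced presence of $(m_{n+1},w_{n+1})$ in any WSNM for one direction and the index constraint in the definition of semi-WSNM for the other. Your version is slightly more explicit in justifying why $(m_{n+1},w_{n+1})$ is noncrossing and why the converse holds, but the argument is the same.
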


\begin{proof}
Note that any WSNM of $I$ includes $(m_{n+1}, w_{n+1})$, as otherwise it is a noncrossing blocking pair.
Hence it is a semi-WSNM with the maximum pair $(m_{n+1}, w_{n+1})$.
Conversely, any semi-WSNM with the maximum pair $(m_{n+1}, w_{n+1})$ does not include a noncrossing blocking pair and hence is also a WSNM.
Therefore, the set of WSNMs is equivalent to the set of semi-WSNMs with the maximum pair $(m_{n+1}, w_{n+1})$.  This completes the proof.
\qed
\end{proof}

To compute $X(n+1, n+1)$, we shortly define quantity $Y(i,j)$ ($0 \leq i \leq n+1, 0 \leq j \leq n+1$) using recursive formulas, and show that $Y(i,j)=X(i,j)$ for all $i$ and $j$.
We then show that these recursive formulas allow us to compute $Y(i,j)$ in polynomial time using dynamic programming.

We say that two noncrossing edges $(m_{i}, w_{j})$ and $(m_{x}, w_{y})$ ($i < x, j < y$) are {\em conflicting} if they contain a noncrossing blocking pair between them; precisely speaking, $(m_{i}, w_{j})$ and $(m_{x}, w_{y})$ are conflicting if the matching $\{ (m_{i}, w_{j}), (m_{x}, w_{y}) \}$ contains a blocking pair $(m_{s}, w_{t})$ such that $i \leq s \leq x$ and $j \leq t \leq y$.
Otherwise, they are {\em nonconflicting}.
Intuitively, two conflicting edges cannot be consecutive elements of a semi-WSNM.

Now $Y(i,j)$ is defined in the Equations (1)--(4).
For convenience, we assume that $-\infty+1=-\infty$.
In Equation (\ref{eq:ij}), $Y(i',j')$ in $\max\{\}$ is taken among all $(i', j')$ such that (i) $0 \leq i' \leq i-1$, (ii) $0 \leq j' \leq j-1$, (iii) $(m_{i'},w_{j'})$ is an acceptable pair, and (iv) $(m_{i},w_{j})$ and $(m_{i'},w_{j'})$ are nonconflicting.
If no such $(i', j')$ exists, $\max\{ Y(i', j') \}$ is defined as $-\infty$ and as a result $Y(i,j)$ is also computed as $-\infty$.

\begin{equation}
Y(0,0)  = 1 \\ \label{eq:00}
\end{equation}

\begin{equation}
Y(0,j)  = -\infty \hspace{5mm} (1\leq j \leq n+1)  \\ \label{eq:0j}
\end{equation}

\begin{equation}
Y(i,0)  = -\infty \hspace{5mm} (1\leq i \leq n+1)  \\ \label{eq:i0}
\end{equation}

\begin{equation}
\displaystyle Y(i,j)   = \\
\begin{cases}
\displaystyle 1+\max\{ Y(i', j') \} \ \ \text{(if $(m_{i}, w_{j})$ is an acceptable pair)} \\ \nonumber
-\infty  \ \  \text{(otherwise)}
\end{cases}
\end{equation}
\begin{equation}
\hspace{40mm} (1 \leq i \leq n+1, 1 \leq j \leq n+1) \label{eq:ij}
\end{equation}

\begin{lemma}\label{lm:2}
$Y(i,j)=X(i,j)$ for any $i$ and $j$ such that $0 \leq i \leq n+1$ and $0 \leq j \leq n+1$.
\end{lemma}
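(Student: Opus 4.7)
The plan is to prove $Y(i,j) = X(i,j)$ by strong induction on $i+j$. The base cases are immediate from the definitions: $Y(0,0) = X(0,0) = 1$ because $\{(m_0, w_0)\}$ is trivially a semi-WSNM with maximum pair $(m_0, w_0)$, and for $i+j \ge 1$ with $i=0$ or $j=0$ the pair $(m_i, w_j)$ is unacceptable (since $m_0$ and $w_0$ accept only each other), forcing both sides to $-\infty$. In the inductive step, if $(m_i, w_j)$ is not acceptable, both quantities are $-\infty$, so we may assume $(m_i, w_j)$ is acceptable.

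For $X(i,j) \le Y(i,j)$, I would take a semi-WSNM $M$ of size $X(i,j)$ with maximum pair $(m_i, w_j)$. Since omitting $(m_0, w_0)$ would yield a noncrossing blocking pair that violates the semi-WSNM property, $(m_0, w_0) \in M$, so $M$ admits a second-largest pair $(m_{i'}, w_{j'})$. I will verify two properties of this pair. First, $M \setminus \{(m_i, w_j)\}$ is itself a semi-WSNM with maximum pair $(m_{i'}, w_{j'})$: any noncrossing blocking pair $(m_s, w_t)$ of the smaller matching is analyzed via cases depending on whether $m_s = m_i$, whether $w_t = w_j$, or neither (and in the last case on whether it crosses the deleted edge), with noncrossing against $(m_{i'}, w_{j'})$ forcing the endpoint bounds $s \ge i'$ and $t \ge j'$ in each subcase. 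Second, $(m_i, w_j)$ and $(m_{i'}, w_{j'})$ are nonconflicting: any blocking pair of the two-edge matching $\{(m_i, w_j), (m_{i'}, w_{j'})\}$ lies in the rectangle $[i',i] \times [j',j]$, whose interior persons are all single in $M$ (else $(m_{i'}, w_{j'})$ would not be the second-largest pair), so the blocking pair extends to a noncrossing blocking pair of $M$ with $s \le i$ and $t \le j$, contradicting the semi-WSNM property of $M$. The induction hypothesis then gives $X(i,j) = |M| = 1 + |M \setminus \{(m_i, w_j)\}| \le 1 + Y(i', j')$, which is at most the right-hand side of the recursion.

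For $X(i,j) \ge Y(i,j)$, I would pick $(i^*, j^*)$ attaining the maximum in the recursion (the $-\infty$ case being vacuous); by induction there is a semi-WSNM $M^*$ of size $Y(i^*, j^*)$ with maximum pair $(m_{i^*}, w_{j^*})$, and I set $M := M^* \cup \{(m_i, w_j)\}$. That $M$ is a matching, is noncrossing, and has $(m_i, w_j)$ as maximum pair follows from $i > i^*$ and $j > j^*$. The main obstacle is to show that every noncrossing blocking pair $(m_s, w_t)$ of $M$ satisfies $s \ge i$ and $t \ge j$. Pairs with $s < i, t > j$ or $s > i, t < j$ cross $(m_i, w_j)$ and are ruled out. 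For pairs with $s < i$ and $t < j$, neither endpoint is $m_i$ or $w_j$, so they block $M^*$; the semi-WSNM property of $M^*$ then pins $s \ge i^*$ and $t \ge j^*$, and the blocking conditions against $M$ translate directly into blocking of the two-edge matching $\{(m_i, w_j), (m_{i^*}, w_{j^*})\}$, contradicting the nonconflicting choice of $(i^*, j^*)$. The boundary cases $s = i, t < j$ and $s < i, t = j$ are handled similarly: noncrossing with $M^*$ pins $t \ge j^*$ (respectively $s \ge i^*$), and the resulting blocking pair of $M$ again witnesses a conflict. This rules out every offending case and yields $X(i,j) \ge 1 + Y(i^*, j^*) = Y(i,j)$, closing the induction.
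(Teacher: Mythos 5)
Your proof is correct and takes essentially the same route as the paper's: induction over the table $Y$, deleting the maximum pair of a semi-WSNM to get $X(i,j)\le Y(i,j)$ and appending a nonconflicting new maximum pair to get $X(i,j)\ge Y(i,j)$, with your strong induction on $i+j$, the two-inequality organization, and your explicit verification of the paper's two ``it is not hard to see'' steps being the only differences. One harmless overstatement: it is not true that every blocking pair of the two-edge matching $\{(m_i,w_j),(m_{i'},w_{j'})\}$ lies in the rectangle $[i',i]\times[j',j]$, but only blocking pairs inside that rectangle are relevant to the definition of conflicting, so your argument is unaffected.
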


\begin{proof}
We prove the claim by induction.
We first show that $Y(0,0)=X(0,0)$.
The matching $\{ (m_{0},w_{0}) \}$ is the unique semi-WSNM with the maximum pair $(m_{0},w_{0})$, so $X(0,0)=1$ by definition.
Also, $Y(0,0)=1$ by Equation (\ref{eq:00}).  Hence we are done.
We then show that $Y(0,j)=X(0,j)$ for $1 \leq j \leq n+1$.
Since $(m_{0}, w_{j})$ is an unacceptable pair, there is no semi-WSNM with the maximum pair $(m_{0}, w_{j})$, so $X(0,j)= -\infty$ by definition.
Also, $Y(0,j)=-\infty$ by Equation (\ref{eq:0j}).  
We can show that $Y(i,0)=X(i,0)$ for $1 \leq i \leq n+1$ by a similar argument.

Next we show that $Y(i,j)=X(i,j)$ holds for $1 \leq i \leq n+1$ and $1 \leq j \leq n+1$.
As an induction hypothesis, we assume that $Y(a,b)=X(a,b)$ holds for $0 \leq a \leq i-1$ and $0 \leq b \leq j-1$.
First, observe that if $X(i,j) \neq -\infty$, then $X(i,j) \geq 2$.
This is because two pairs $(m_{0}, w_{0})$ and $(m_{i}, w_{j})$ must present in any semi-WSNM having the maximum pair $(m_{i}, w_{j})$.

We first consider the case that $X(i,j) \geq 2$.  
Let $X(i,j) =k$.  
Then, there is a semi-WSNM $M$ with the maximum pair $(m_{i},w_{j})$ such that $|M|=k$.
Let $M' = M \setminus \{ (m_{i},w_{j}) \}$ and $(m_{x}, w_{y})$ be the maximum pair of $M'$.
It is not hard to see that $M'$ is a semi-WSNM with the maximum pair $(m_{x}, w_{y})$ and that $|M'|=k-1$.
Therefore, $X(x,y) \geq k-1$ by the definition of $X$, and $Y(x,y) = X(x,y) \geq k-1$ by the induction hypothesis.
Since $M$ is a semi-WSNM, $(m_{i},w_{j})$ and $(m_{x}, w_{y})$ are nonconflicting, so $(x,y)$ satisfies the condition for $(i',j')$ in Equation (\ref{eq:ij}).
Hence $Y(i,j) \geq  1+Y(x,y) \geq k$.
Suppose that $Y(i,j) \geq k+1$.
By the definition of $Y$, this means that there is $(i',j')$ that satisfies conditions (i)--(iv) for Equation (\ref{eq:ij}), and $Y(i',j') \geq k$.
By the induction hypothesis, $X(i',j') = Y(i',j') \geq k$.
Then there is a semi-WSNM $M'$ with the maximum pair $(m_{i'},w_{j'})$ such that $|M'| \geq k$.
Since $M'$ is a semi-WSNM, and $(m_{i'},w_{j'})$ and $(m_{i}, w_{j})$ are nonconflicting, $M=M' \cup \{ (m_{i},w_{j}) \}$ is a semi-WSNM with the maximum pair $(m_{i},w_{j})$ such that $|M|=|M'|+1 \geq k+1$.  
This contradicts the assumption that $X(i,j)=k$.  Hence $Y(i,j) \leq k$ and therefore $Y(i,j) =k$ as desired.

Finally, consider the case that $X(i,j) = -\infty$.
If $(m_{i}, w_{j})$ is unacceptable, then the latter case of Equation (\ref{eq:ij}) is applied and $Y(i,j) = -\infty$.
So assume that $(m_{i}, w_{j})$ is acceptable.
Then the former case of Equation (\ref{eq:ij}) is applied.
It suffices to show that for any $(i',j')$ that satisfies conditions (i)--(iv) (if any), $Y(i',j') = -\infty$ holds.
Assume on the contrary that there is such $(i',j')$ with $Y(i',j') =k$.
Then $X(i',j') =k$ by the induction hypothesis, and there is a semi-WSNM $M'$ such that $|M'|=k$, $(m_{i'},w_{j'})$ is the maximum pair of $M'$, and $(m_{i'},w_{j'})$ and $(m_{i}, w_{j})$ are nonconflicting.
Then $M=M' \cup \{ (m_{i},w_{j}) \}$ is a semi-WSNM such that $(m_{i},w_{j})$ is the maximum pair and $|M|=|M'|+1 = k+1$, implying that  $X(i,j)=k+1$.
This contradicts the assumption that $X(i,j)=-\infty$ and the proof is completed.
\qed
\end{proof}

Now we analyze time-complexity of the algorithm.
Computing each $Y(0,0)$, $Y(0,j)$, and $Y(i,0)$ can be done in constant time.
For computing one $Y(i,j)$ according to Equation (\ref{eq:ij}), there are $O(n^{2})$ candidates for $(i',j')$.
For each $(i',j')$, checking if $(m_{i'},w_{j'})$ and $(m_{i},w_{j})$ are conflicting can be done in constant time with $O(n^{4})$-time preprocessing described in subsequent paragraphs.
Therefore one $Y(i,j)$ can be computed in time $O(n^{2})$.
Since there are $O(n^{2})$ $Y(i,j)$s, the time-complexity for computing all $Y(i,j)$s is $O(n^{4})$. 
Adding the $O(n^{4})$-time for preprocessing mentioned above, the total time-complexity of the algorithm is $O(n^{4})$. 

In the preprocessing, we construct three tables $S$, $A$, and $B$.
\begin{itemize}
\item $S$ is a $\Theta(n^{4})$-sized four-dimensional table that takes logical values 0 and 1.
For $0 \leq i' \leq i \leq n+1$ and $0 \leq j' \leq j \leq n+1$, $S(i',i,j',j) =1$ if and only if there exists at least one acceptable pair $(m,w)$ such that $m \in \{m_{i'}, m_{i'+1}, \dots, m_{i}\}$ and $w \in \{w_{j'}, w_{j'+1}, \dots, w_{j}\}$.
Since $S(i,i,j,j)=1$ if and only if $(m_{i}, w_{j})$ is an acceptable pair, it can be computed in constant time.
In general, $S(i',i,j',j)$ can be computed in constant time as follows.
\[
 S(i',i,j',j)=
 \begin{cases}
   1 \hspace{5mm} \text{(if $(m_{i}, w_{j})$ is an acceptable pair)}\\
   S(i',i-1,j',j) \lor S(i',i,j',j-1) \hspace{5mm} \text{(otherwise)}
 \end{cases}
\]
Hence $S$ can be constructed in $O(n^{4})$ time by a simple dynamic programming.

\item $A$ is a $\Theta(n^{3})$-sized table.
For convenience, we introduce an imaginary person $\lambda$ who is acceptable to any person, where $q \succ_{p} \lambda$ holds for any person $p$ and any person $q$ acceptable to $p$.
For $0 \leq i \leq n+1$ and $0 \leq j' \leq j \leq n+1$, $A(i,j',j)$ stores the woman whom $m_{i}$ most prefers among $\{w_{j'},\dots,w_{j}, \lambda\}$.
Then, for $i$ and $j$, $A(i,j,j) = w_{j}$ if $(m_{i}, w_{j})$ is an acceptable pair and $A(i,j,j) = \lambda$ otherwise.
$A(i,j',j)$ can be computed as the better of $A(i,j',j-1)$ and $A(i,j,j)$ in $m_{i}$'s list.
By the above arguments, each element of $A$ can be computed in constant time and hence $A$ can be constructed in $O(n^{3})$ time.

\item $B$ plays a symmetric role to $A$; for $0 \leq i' \leq i \leq n+1$ and $0 \leq j \leq n+1$, $B(i',i,j)$ stores the man whom $w_{j}$ most prefers among $\{m_{i'},\dots,m_{i}, \lambda \}$.
$B$ can also be constructed in $O(n^{3})$ time.

\end{itemize}

It is easy to see that $(m_{i'},w_{j'})$ and $(m_{i},w_{j})$ are conflicting if and only if one of the following conditions hold.
Condition 1 can be clearly checked in constant time.
Thanks to the preprocessing, Conditions 2--4 can also be checked in constant time.

\begin{enumerate}

\item $(m_{i'}, w_{j})$ or $(m_{i}, w_{j'})$ is a blocking pair for the matching $\{ (m_{i'}, w_{j'}), (m_{i}, w_{j}) \}$. 

\item $S(i'+1,i-1, j'+1, j-1) =1$. 
(If this holds, there is a blocking pair $(m,w)$ such that $m \in \{ m_{i'+1}, m_{i'+2}, \ldots, m_{i-1} \}$ and $w \in \{ w_{j'+1}, w_{j'+2}, \ldots, w_{j-1} \}$.)

\item $m_{i}$ prefers $A(i,j'+1,j-1)$ to $w_{j}$ or $m_{i'}$ prefers $A(i',j'+1,j-1)$ to $w_{j'}$.
(If this holds, there exists a blocking pair $(m,w)$ such that $m \in \{m_{i'},m_{i}\}$ and $w \in \{w_{j'+1}, \dots, w_{j-1}\}$.)

\item $w_{j}$ prefers $B(i'+1,i-1,j)$ to $m_{i}$ or $w_{j'}$ prefers $B(i'+1,i-1,j')$ to $m_{i'}$.
(If this holds, there exists a blocking pair $(m,w)$ such that $m \in \{m_{i'+1},\dots,m_{i-1}\}$ and $w \in \{w_{j'}, w_{j}\}$). 

\end{enumerate}

This completes the explanation on preprocessing, and from the discussion so far, we have the following theorem:

\begin{theorem}\label{thm:MAX-WSNM}
There exists an $O(n^{4})$-time algorithm to find a maximum cardinality WSNM, given an SMI-instance.
\end{theorem}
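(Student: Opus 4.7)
The plan is to assemble the pieces already built up in this subsection into a single algorithm and argument. Starting from the input instance $I'$ on $n+n$ people, I would first perform the sentinel reduction described just before Lemma~\ref{lm:1}: add $(m_0,w_0)$ and $(m_{n+1},w_{n+1})$ to obtain $I$ in $O(n)$ time, recalling that $opt(I)=opt(I')+2$ and that an optimal WSNM of $I'$ is obtained from any optimal WSNM of $I$ by deleting these two sentinel pairs. By Lemma~\ref{lm:1}, $opt(I)=X(n+1,n+1)$, and by Lemma~\ref{lm:2}, $X(n+1,n+1)=Y(n+1,n+1)$. Hence it suffices to compute $Y(n+1,n+1)$ together with an attaining semi-WSNM.

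Next I would describe the dynamic program. First build the preprocessing tables $S$, $A$, and $B$ exactly as described in the three bulleted items above; this costs $O(n^4)$, $O(n^3)$, and $O(n^3)$ time respectively. Then fill in the values $Y(i,j)$ in any order consistent with the partial order $(i',j')\le(i,j)\iff i'\le i$ and $j'\le j$: the base cases $Y(0,0)$, $Y(0,j)$, $Y(i,0)$ are handled by Equations (\ref{eq:00})--(\ref{eq:i0}) in constant time, and each remaining entry is computed from Equation (\ref{eq:ij}) by enumerating the $O(n^{2})$ candidates $(i',j')$ with $i'<i$, $j'<j$, $(m_{i'},w_{j'})$ acceptable, and $(m_{i'},w_{j'})$ nonconflicting with $(m_i,w_j)$. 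The nonconflicting test reduces, by the four equivalent conditions stated above, to one direct check of the blocking-pair definition together with a single lookup in each of $S$, $A$, and $B$, so it runs in $O(1)$. To enable traceback, I would also store, at each $(i,j)$, a pointer to an $(i',j')$ achieving the maximum.

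For correctness, Lemmas~\ref{lm:1} and \ref{lm:2} already give $opt(I)=Y(n+1,n+1)$, so the computed value is correct. An attaining WSNM is recovered by following the back-pointers from $(n+1,n+1)$ down to $(0,0)$, collecting the pairs $(m_i,w_j)$ along the way; the semi-WSNM invariant guarantees that the reconstructed matching is noncrossing and admits no noncrossing blocking pair, i.e., is a WSNM of $I$, which after removing $(m_0,w_0)$ and $(m_{n+1},w_{n+1})$ yields a maximum cardinality WSNM of $I'$. Counting work, the table has $O(n^2)$ entries, each filled in $O(n^2)$ time, so the main DP costs $O(n^4)$; adding the $O(n^4)$ preprocessing and the $O(n)$ traceback gives total time $O(n^4)$.

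The only delicate step is verifying that Conditions 1--4 above truly enumerate all ways in which two noncrossing edges $(m_{i'},w_{j'})$ and $(m_i,w_j)$ can contain a blocking pair $(m_s,w_t)$ in the rectangle $i'\le s\le i$, $j'\le t\le j$: Condition 1 covers the four corner configurations where $\{m_s,w_t\}\subseteq\{m_{i'},m_i,w_{j'},w_j\}$, Condition 2 covers the case where both $m_s$ and $w_t$ lie strictly interior, and Conditions 3 and 4 cover the two ``edge'' cases where exactly one of $m_s,w_t$ is a boundary person. This case split, together with the fact that $A$ and $B$ store the most preferred partner in an interval, is what makes the constant-time conflict test work and is the substantive content left to check; the rest of the argument is bookkeeping.
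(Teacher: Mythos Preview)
Your proposal is correct and follows essentially the same approach as the paper: the sentinel reduction, the appeal to Lemmas~\ref{lm:1} and~\ref{lm:2}, the $O(n^4)$ preprocessing of tables $S$, $A$, $B$, the $O(n^2)$-per-entry DP over $O(n^2)$ entries, and the four-condition constant-time conflict test are all exactly what the paper does. Your explicit mention of back-pointers for traceback and your case analysis justifying Conditions~1--4 fill in details the paper leaves implicit (it simply asserts ``It is easy to see''), but the overall argument is the same.
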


\subsection{SMTI}\label{sec:WSNM-SMTI}

Similarly to the SMI case, a weak-WSNM exists in any SMTI-instance: 
Given an SMTI-instance $I$, break all the ties arbitrarily and obtain an SMI-instance $I'$.
Let $M$ be a WSNM of $I'$.
Then it is not hard to see that $M$ is also a weak-WSNM of $I$.
In contrast, there is a simple instance that admits neither a strong- nor a super-WSNM (Fig.~\ref{fig:no-strong-super}).
The empty matching is blocked by any acceptable pair.
The matching $\{ (m_{1}, w_{1}) \}$ is blocked by $(m_{2}, w_{2})$.
The matching $\{ (m_{2}, w_{2}) \}$ is blocked by $(m_{1}, w_{1})$.
The matching $\{ (m_{1}, w_{1}), (m_{2}, w_{2}) \}$ is blocked by $(m_{2}, w_{1})$.
The matching $\{ (m_{2}, w_{1}) \}$ is blocked by $(m_{1}, w_{1})$.

\begin{figure}[ht]
\begin{center}
\renewcommand\arraystretch{1.2}
\begin{tabular}{lllllllllllllllllllllll}
$m_{1}$:  & $w_{1}$ &  & \hspace{15mm} & $w_{1}$:  &  ($m_{1}$ & $m_{2}$) \\
$m_{2}$:  & $w_{1}$ & $w_{2}$ & & $w_{2}$:  &  $m_{2}$
\end{tabular}
\caption{An instance that admits neither a strong-WSNM nor a super-WSNM.}\label{fig:no-strong-super}
\end{center}
\end{figure}

Nevertheless, the algorithm in Sect.~\ref{sec:WSNM-SMI} can be applied to SMTI straightforwardly.
Necessary modifications are summarized as follows:

\begin{itemize}

\item As mentioned above, existence of a WSNM is not guaranteed.
If our algorithm outputs $Y(n+1,n+1)=-\infty$, then it means that no solution exists.

\item The definition of two edges $(m_{i},w_{j})$ and $(m_{x}, w_{y})$ being conflicting must be modified depending on one of the three stability notions.

\item The definitions of the tables $A$ and $B$ need to be modified as follows.
$A(i,j',j)$ stores {\em one of} the women whom $m_{i}$ most prefers among $\{w_{j'},\dots,w_{j}, \lambda \}$.
Similarly, $B(i',i,j)$ stores {\em one of} the men whom $w_{j}$ most prefers among $\{m_{i'},\dots,m_{i}, \lambda \}$.

\item In the SMI case, $A(i,j',j)$ is computed as {\it the better} of $A(i,j',j-1)$ and $A(i,j,j)$ in $m_{i}$'s list.  But now it can happen that $A(i,j',j-1) =_{m_{i}} A(i,j,j)$, in which case $A(i,j',j)$ can be either $A(i,j',j-1)$ or $A(i,j,j)$.  (Strictly speaking, this treatment was needed already in the SMI case because there can be a case that $A(i,j',j-1) = A(i,j,j) = \lambda$, but there we took simplicity.)

\item Conditions 3 and 4 in checking confliction of two edges need to be modified as follows.  In the super- and strong stabilities, ``prefers'' should be replaced by ``weakly prefers''.  In the weak stability, ``prefers'' should be replaced by ``strictly prefers''.  

\end{itemize}

With these modifications, whether two edges are conflicting or not can be checked in constant time. 
Therefore, we have the following corollary:

\begin{corollary}\label{coro:MAX-WSNM-super-strong-weak}
There exists an $O(n^{4})$-time algorithm to find a maximum cardinality super-WSNM (strong-WSNM, weak-WSNM) or report that none exists, given an SMTI-instance.
\end{corollary}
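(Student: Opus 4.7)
The plan is to adapt the dynamic programming algorithm of Theorem~\ref{thm:MAX-WSNM} to the three stability notions, following the five modification bullets already listed in the section but filling in the justifications. First, I would repeat the padding construction: add the dummy pairs $(m_0,w_0)$ and $(m_{n+1},w_{n+1})$ with mutually exclusive singleton lists, so that the relation $opt(I)=opt(I')+2$ carries over verbatim (these dummies cannot block under any of the three stability notions because neither endpoint has any other acceptable partner). Then I would redefine, for each of the three stabilities, what it means for a pair $(m_i,w_j),(m_x,w_y)$ to be \emph{conflicting}: it is conflicting iff there is a blocking pair $(m_s,w_t)$ (in the sense of that stability) with $i\le s\le x$ and $j\le t\le y$, and all other definitions (semi-WSNM, maximum pair, $X(i,j)$, $Y(i,j)$, and the recurrences~(\ref{eq:00})--(\ref{eq:ij})) are reused unchanged.

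Next I would argue that Lemmas~\ref{lm:1} and~\ref{lm:2} go through with no structural change: the proofs only used that a semi-WSNM is characterized by having no blocking pair in the appropriate region, and that splitting off the maximum pair preserves this property, neither of which depends on how blocking pairs are defined. Hence for each stability notion we still have $opt(I)=Y(n+1,n+1)$, with the convention that $Y(n+1,n+1)=-\infty$ precisely when no solution exists, giving the ``report none exists'' behavior required in the statement.

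The real content is redoing the constant-time conflict check. The four-dimensional table $S$ is defined identically, since it only records acceptability. The tables $A$ and $B$ must be weakened to \emph{some} top-preferred woman (resp.\ man) in the relevant range, because with ties there may be several; the recurrence $A(i,j',j)=\text{the better of }A(i,j',j-1)\text{ and }A(i,j,j)$ is replaced by ``pick either when they are tied,'' which still takes constant time per entry. I would then revisit Conditions~1--4 and verify they correctly detect conflict. Conditions~1 and~2 are syntactic and unchanged. For Conditions~3 and~4, ``prefers'' must be read as ``weakly prefers'' in the super- and strong-stability cases (so that an equal-rank alternative still blocks) and as ``strictly prefers'' in the weak-stability case. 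I would check that storing only \emph{one} top-ranked witness in $A$ or $B$ suffices: if some witness in the range forms a blocking pair, then the stored witness is ranked equally high, so it too forms a blocking pair under the appropriate relation.

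The main obstacle is the last verification for the strong stability, where the two sides of a blocking pair treat ``equality'' asymmetrically: a blocking pair $(m,w)$ must have one side strictly improving and the other weakly improving, and this asymmetry has to be respected when probing Conditions~3 (woman side) and~4 (man side). I would handle this by reading the inequality in each condition in the direction dictated by which side of the potential blocking pair lies in the interval being probed; since in each condition one endpoint is fixed ($m_i$ or $m_{i'}$ in Condition~3, $w_j$ or $w_{j'}$ in Condition~4) and the other ranges over a consecutive interval tracked by $A$ or $B$, the correct ``weak'' vs.\ ``strict'' choice is determined uniformly per condition, so correctness follows. Once this is in place, each $Y(i,j)$ is computed in $O(n^2)$ time, the preprocessing of $S,A,B$ takes $O(n^4)$ time, and the overall bound $O(n^4)$ matches the SMI case, completing the corollary.
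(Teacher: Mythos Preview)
Your proposal is correct and follows essentially the same approach as the paper: reuse the dynamic program of Theorem~\ref{thm:MAX-WSNM} with the modifications listed in the five bullets (adjust the conflict definition per stability, store an arbitrary top-ranked witness in $A$ and $B$, and replace ``prefers'' by the appropriate weak/strict relation in Conditions~3 and~4), returning ``no solution'' when $Y(n+1,n+1)=-\infty$. Your discussion of the strong-stability asymmetry is more explicit than the paper's, but it lands on the same rule---``weakly prefers'' in both Conditions~3 and~4---because the interval side is always single and hence strictly improves, which is exactly why the paper can treat strong stability together with super-stability in one clause.
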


\section{Conclusion}\label{sec:conclusion}

In this paper, we have shown algorithms and complexity results for the problems of determining existence of an SSNM and finding a maximum cardinality WSNM, in the settings both with and without ties.

One of interesting future works is to consider optimization problems.
For example, in SMI we have shown that it is easy to determine if there exists an SSNM with zero-crossing.  
What is the complexity of the problem of finding an SSNM with the minimum number of crossings, and if it is NP-hard, is there a good approximation algorithm for it?
Also, it might be interesting to consider noncrossing stable matchings for other placements of agents, e.g., on a circle or on general position in 2-dimensional plane.




%
%


\begin{thebibliography}{}
%
%
\bibitem{apr05}
Abdulkadiro\v{g}lu, A., Pathak, P.A., Roth, A.E.: The New York City high school match. Am. Econ. Rev. {\bf 95}(2), 364--367 (2005)

\bibitem{aprs05}
Abdulkadiro\v{g}lu, A., Pathak, P.A., Roth, A.E., S\"onmez, T.: The Boston public school match. Am. Econ. Rev. {\bf 95}(2), 368--371 (2005)

\bibitem{atal85}
Atallah, M.J.: A matching problem in the plane. J. of Computer and System Sciences {\bf 31}(1), 63--70 (1985)

\bibitem{bmm12}
Bir\'o, P., Manlove, D.F., McDermid, E.: "Almost stable" matchings in the Roommates problem with bounded preference lists. Theor. Comput. Sci. {\bf 432}, 10--20 (2012)

\bibitem{bmm10}
Bir\'o, P., Manlove, D.F., Mittal, S.: Size versus stability in the marriage problem. Theor. Comput. Sci. {\bf 411}(16-18), 1828--1841 (2010)

\bibitem{clw15}
Chen, D.Z., Liu, X., Wang, H.: Computing maximum non-crossing matching in convex bipartite graphs. Discrete Appl. Math. {\bf 187}, 50--60 (2015)

\bibitem{cook71}
Cook, S.A.: The complexity of theorem-proving procedures. In: Proceedings. STOC 1971, pp. 151--158 (1971)  


\bibitem{gs62}
Gale, D., Shapley, L.S.: College admissions and the stability of marriage. Am. Math. Mon. {\bf 69}(1), 9--15 (1962)

\bibitem{gs85} 
Gale, D., Sotomayor, M.: Some remarks on the stable matching problem. Discrete Appl. Math. {\bf 11}(3), 223--232 (1985)

\bibitem{gi89}
Gusfield, D., Irving, R.W.: The Stable Marriage Problem: Structure and Algorithms. MIT Press, Boston (1989)

\bibitem{hmo20}
Hamada, K., Miyazaki, S., Okamoto, K.: Strongly stable and maximum weakly stable noncrossing matchings. In: Proceedings IWOCA 2020. LNCS, vol.~12126, 304--315 (2020)

\bibitem{ir94}
Irving, R.W.:Stable marriage and indifference. Discrete Appl. Math. {\bf 48}, 261--272 (1994)

\bibitem{imm09}
Irving, R.W., Manlove, D.F., O'Malley, G.: Stable marriage with ties and bounded length preference lists. J. Discrete Algorithms {\bf 7}(2), 213--219 (2009)

\bibitem{ims00}
Irving, R. W., Manlove, D. F., Scott, S.: The hospitals/residents problem with ties. In: Proceedings SWAT 2000. LNCS, vol.~1851, pp. 259--271 (2000)  

\bibitem{ims03}
Irving, R.W., Manlove, D.F., Scott, S.: Strong stability in the hospitals/residents problem. In: Proceedings STACS 2003. LNCS, vol.~2607, pp. 439-450 (2003) 

\bibitem{kt86}
Kajitani, Y., Takahashi, T.: The noncross matching and applications to the 3-side switch box routing in VLSI layout design. In: Proceedings IEEE International Symposium on Circuits and Systems, pp.~776--779 (1986)

\bibitem{kmmp04}
Kavitha, T., Mehlhorn, K., Michail, D., Paluch, K.: Strongly stable matchings in time $O(nm)$ and extension to the hospitals-residents problem. ACM Trans. Algorithms {\bf 3}(2) (2007). Article No.~15

\bibitem{knuth76}
Knuth, D.E.: Mariages Stables, Les Presses de l'Universit\'e Montr\'eal, (1976).  (Translated and corrected edition, Stable Marriage and Its Relation to Other Combinatorial Problems, CRM Proceedings and Lecture Notes, Vol. 10, American mathematical Society, 1997.)

\bibitem{mop93}
Malucelli, F., Ottmann, T., Pretolani, D.: Efficient labelling algorithms for the maximum noncrossing matching problem. Discrete Appl. Math. {\bf 47}(2), 175--179 (1993)

\bibitem{man13}
Manlove, D.F.: Algorithmics of Matching under Preferences. World Scientific, Singapore (2013)

\bibitem{mo19}
Miyazaki, S., Okamoto, K.: Jointly stable matchings. J. Comb. Optim. {\bf 38}(2), 646--665 (2019)

\bibitem{roth84}
Roth, A.E.: The evolution of the labor market for medical interns and residents: a case study in game theory. J. Political Economy {\bf 92}(6), 991--1016 (1984)

\bibitem{roth86}
Roth, A.E.: On the allocation of residents to rural hospitals: a general property of two-sided matching markets. Econometrica {\bf 54}(2), 425--427 (1986)

\bibitem{rs90}
Roth, A.E., Sotomayor, M.: Two-sided Matching: A Study in Game-theoretic Modeling and Analysis. Cambridge University Press, Cambridge, (1990)

\bibitem{ri19}
Ruangwises, S., Itoh, T.: Stable noncrossing matchings. In: Proceedings IWOCA 2019. LNCS,  vol.~11638, pp.~405--416 (2019)

\bibitem{tov84}
Tovey, C.A.: A simplified NP-complete satisfiability problem. Discrete Appl. Math. {\bf 8}, 85--89 (1984)

\bibitem{ww85}
Widmayer P., Wong, C.K.: An optimal algorithm for the maximum alignment of terminals. Information Processing Letters {\bf 20}(2), 75--82 (1985)

\end{thebibliography}


\end{document}